\newtheorem{lemma}{Lemma}
\newtheorem{theorem}{Theorem}
\newcommand*\samethanks[1][\value{footnote}]{\footnotemark[#1]}
\title{Fair Correlation Clustering}
\author[1]{Saba Ahmadi\thanks{The first two authors have contributed equally to this work.}}
\author[2]{Sainyam Galhotra\samethanks}
\author[3]{Barna Saha}
\author[4]{Schwartz Roy}
\affil[1]{Department of Computer Science,\protect\\
  University of Maryland,\protect\\
  \texttt{saba@cs.umd.edu}}
\affil[3]{College of Information and Computer Sciences,\protect\\
  University of Massachusetts Amherst,\protect\\
  Amherst, MA 01003\protect\\
  \texttt{sainyam@cs.umass.edu}}
\affil[3]{University of California Berkeley, USA\protect\\
        \texttt{barnas@berkeley.edu}}
\affil[4]{Technion - Israel Institute of Technology \protect\\
        \texttt{schwartz@cs.technion.ac.il}}
\begin{document}

\maketitle

% It is OKAY to include author information, even for blind
% submissions: the style file will automatically remove it for you
% unless you've provided the [accepted] option to the icml2020
% package.

% List of affiliations: The first argument should be a (short)
% identifier you will use later to specify author affiliations
% Academic affiliations should list Department, University, City, Region, Country
% Industry affiliations should list Company, City, Region, Country

% You can specify symbols, otherwise they are numbered in order.
% Ideally, you should not use this facility. Affiliations will be numbered
% in order of appearance and this is the preferred way.
%\icmlsetsymbol{equal}{*}

%\begin{icmlauthorlist}
%\icmlauthor{Saba Ahmadi}{goo}
%\icmlauthor{Sainyam Galhotra}{equal,to,goo}
%\icmlauthor{Barna Saha}{g}
%\icmlauthor{Roy Schwartz}{ed}
%\end{icmlauthorlist}

%TO Be Completed After Acceptance
%\icmlaffiliation{goo}{Department of Computer Science, University of Maryland}
%\icmlaffiliation{equal,to,goo}{Department of Computer Science, University of Massachusetts Amherst}
%\icmlaffiliation{goo}{Googol ShallowMind, New London, Michigan, USA}
%\icmlaffiliation{ed}{School of Computation, University of Edenborrow, Edenborrow, United Kingdom}

%\icmlcorrespondingauthor{Saba Ahmadi}{saba@cs.umd.edu}
%\icmlcorrespondingauthor{Sainyam Galhotra}{sainyam@cs.umass.edu}

% You may provide any keywords that you
% find helpful for describing your paper; these are used to populate
% the "keywords" metadata in the PDF but will not be shown in the document
%\icmlkeywords{Algorithms, Correlation Clustering, Fairness}

\vskip 0.3in

% this must go after the closing bracket ] following \twocolumn[ ...

% This command actually creates the footnote in the first column
% listing the affiliations and the copyright notice.
% The command takes one argument, which is text to display at the start of the footnote.
% The \icmlEqualContribution command is standard text for equal contribution.
% Remove it (just {}) if you do not need this facility.

%\printAffiliationsAndNotice{}  % leave blank if no need to mention equal contribution
%\printAffiliationsAndNotice{\icmlEqualContribution} % otherwise use the standard text.
\begin{abstract}
%\todo{TODO: Add relative over representation results(Section 5.2) for two colors to abstract.}

%\todo{Question for all: Given we got some results for the relative upper and lower bounds for two colors, do we still want to add the absolute only upper bound results?}

In this paper we study the problem of correlation clustering under fairness constraints. In the classic correlation clustering problem, we are given a complete graph where each edge is labeled positive or negative. The goal is to obtain a clustering of the vertices that minimizes disagreements –- the number of of negative edges trapped inside a cluster plus positive edges between different clusters.
We consider two variations of fairness constraint for the problem of correlation clustering where each node has a color, and the goal is to form clusters that do not over-represent vertices of any color. 

The first variant aims to generate clusters with minimum disagreements, where the distribution of a feature (e.g. gender) in each cluster is same as the global distribution. For the case of two colors when the desired ratio of the number of colors in each cluster is $1:p$, we get $\mathcal{O}(p^2)$-approximation algorithm.
Our algorithm could be extended to the case of multiple colors. 
%In this case, the approximation ratio will also depend on the number of colors.
%For the case of multiple colors, when the number of colors is $|\mathcal{C}|$, and the ratio of number of nodes of color $c_1$ to color $c_i$ is $1:p_i$ ($\forall 1 < i \leq |\mathcal{C}|$), we get an approximation ratio of $\mathcal{O}( \max_{i=1}^{|\mathcal{C}|}\{p_i\}^2\cdot|\mathcal{C}|)$. 
We prove this problem is NP-hard.

The second variant considers relative upper and lower bounds on the number of nodes of any color in a cluster. The goal is to avoid violating upper and lower bounds corresponding to each color in each cluster while minimizing the total number of disagreements. %We show constant factor approximation algorithm for this variation.
Along with our theoretical results, we show the effectiveness of our algorithm to generate fair clusters by empirical evaluation on real world data sets.
\end{abstract}
\newpage
\section{Introduction}
The ubiquitous use of Machine learning tools for everyday decision making has brought the issue of fairness to the forefront. Many automated algorithms were shown to have implicit biases against certain demographies. In order to build machine learning algorithms that are inclusive, unbiased and helpful to the entire population, the recent years have seen a surge of research related to {\em fairness}. Many of the typical application scenarios where fairness has been identified to be crucial (e.g. lending, marketing, job selection etc.) requires clustering large datasets with sensitive features. These datasets often come as a network. In order to incorporate fairness into clustering, the seminal work by Chierichetti, Kumar, Lattanzi, and Vassilvitskii~\cite{chierichetti2017} proposed that each cluster has proportional representation from different demographic groups.They designed new approximation algorithms for the classic k-center and k-median clustering objectives with this notion of fairness. Subsequently, Schmidt, Schwiegelshohn, and Sohler~\cite{schmidt2018} extended the framework to k-means clustering. While these works study clustering algorithms over a metric space, many clustering applications work over network data, and that calls for designing {\em graph clustering} algorithms that are fair to all demographies. In a very recent work, Kleindessner, Samadi, Awasthi and Morgenstern consider the problem of fair spectral clustering~\cite{kleindessner2019guarantees}. They prove rigorous theoretical bounds for their algorithms over the stochastic block model. However, the analysis over arbitrary networks is still not known. In particular, the use of triangle inequalities makes the analysis of metric based fair clustering easier compared to graph clustering where the metricity is lacking.  

In this paper, we consider a fair variant of the classic optimization problem of correlation clustering. Correlation clustering is one of the most widely used clustering paradigms, and as claimed by Bonchi et al.~\cite{bonchi2014} ``arguably the most natural formulation of clustering''. Given a set of objects and a pairwise similarity measure between them, the objective is to partition the objects so that, to the best possible extent, similar objects are put in the same cluster and dissimilar objects are put in different clusters. This is represented by constructing a complete graph where edges are either labeled positive (similar objects) or negative (dissimilar objects). The edges can also be weighted. An algorithm for correlation clustering aims to minimize the disagreements among vertices, calculated as the weight of negative edges trapped inside a cluster plus positive edges between different clusters. As it just requires a definition of similarity, it can be applied broadly to a wide range of problems in different contexts such as social network analysis, data mining, computational biology, business and marketing \cite{veldt2018correlation,bonchi2014,hou2016new}.

%There is a large body of recent work on designing fair algorithms for classic problems including clustering~\cite{chierichetti2017}, ranking~\cite{celis2017ranking} and optimization subject to one or multiple matroid constraints~\cite{chierichetti2019matroids}. However, there has been limited focus towards devising fair algorithms for correlation clustering, one of the widely used clustering objectives, especially over social networks~\cite{veldt2018correlation}. 

Similar to other clustering algorithms, the known algorithms for correlation clustering may produce significantly biased output. In this work, we initiate a study of a fair variant of the correlation clustering problem where each vertex has a given feature, and the goal is to make sure that the distribution of the features is the same as the global distribution in each cluster. This is the same notion of fairness studied by Chierichetti et al.~\cite{chierichetti2017}, Bercea et al.~\cite{bercea2018cost} and Bera et al.~\cite{bera2019fair} on k-center and k-median. In another variation, the goal is to make sure the number of nodes of a specific feature $c_i$ in a cluster of size $n$ is between $\frac{n}{q_i}, \frac{n}{p_i}$ where $p_i\leq q_i\in \mathcal{Z}^{\geq 1}$, and $p_i, q_i$ are specified per each feature $c_i$. This later model was originally proposed by Ahmadian, Epasto, Kumar and Mahdian~\cite{ahmadian2019clustering} with only the upper bound and later Bera et al.~\cite{bera2019fair} generalized it to consider lower bound (both $p_i$ and $q_i$). Having a lower bound ensures that every color is represented in each cluster. They studied the $k$-center problem under this framework. In all our algorithms, we maintain the fairness constraints strictly, and optimize the objective function. That is, we give {\em exact} approximation algorithms, as opposed to {\em bi-criteria} approximation.

%Correlation clustering has been studied extensively for more
%than a decade. However, there has been limited focus on ensuring fairness along with optimization of correlation clustering objective. Our problem is motivated by the application of this problem in social networks where the goal is to form communities while ensuring fairness and balance constraints in the formed communities.
\subsection{Contributions \& Roadmap} 

Our contributions are as follows. 

For the first fairness variant, we can assume that each node has a color and the goal is to keep the distribution of the colors in each cluster same as the global distribution. First we show our results for the case of $2$ colors, and later we extend the results to an arbitrary number of colors.
Our approach for $2$ colors has some similarities to the approach proposed by Chierichitti et al.~\cite{chierichetti2017} for the k-center and k-median problems. The analysis of fair clusters with centroid based objectives leverages triangle inequality to bound the total objective value of the returned clusters. However, calculating the total disagreements for correlation clustering requires us to analyze the graph properties, leading to a completely different analysis as compared to \cite{chierichetti2017}. 

Assume nodes in the input graph are either red or blue and the goal is to have a ratio of $1:p$ of the number of red nodes to the number of blue nodes, where $p\in \mathcal{Z}^{\geq 1}$. 
%Our approach is a combinatorial algorithm consisting of two steps. First, we build a weighted bipartite graph of red nodes to blue nodes where the edges of the bipartite graph show a lower bound on the increase in the number of disagreements if the endpoints of the edge are clustered together. Next, we find a minimum cost $b$-matching from red nodes to blue nodes such that the degree of each red node is $p$ and the degree of each blue node is $1$. Next, we pick the red node from each component and use an arbitrary approximation algorithm for correlation clustering that minimizes the total number of disagreements on the correlation clustering instance constructed only from red nodes. Next, we assign the blue nodes of each component to the same cluster containing the representative red node of that component.\\
In Section~\ref{sec:general-2-colors-CC}, we design a new algorithm with the following guarantees:
\begin{theorem}
\label{thm:2-colors-general}
Given a complete unweighted graph $G(V,E)$ where edges are labeled positive or negative, and the nodes are either red or blue where the ratio of number of red nodes to the number of blue nodes is $1:p$ for $p\in \mathcal{Z}^{\geq 1}$, there exists an algorithm which gives a clustering with ratio $1:p$ of number of red to blue nodes in each cluster and at most $\mathcal{O}(p^2)\cdot OPT$ disagreements.
\end{theorem}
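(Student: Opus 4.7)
The plan is to adapt the fairlet decomposition paradigm of Chierichetti et al.~\cite{chierichetti2017} from metric clustering to correlation clustering. Call a set of $p+1$ nodes containing exactly one red and $p$ blue vertices a \emph{fairlet}; any fair clustering refines into a union of fairlets, so the optimum $OPT$ itself induces a fairlet decomposition whose internal negative edges are already counted among its disagreements. This gives a natural decomposition of the cost of any fair clustering into an \emph{internal} part (edges inside fairlets) and an \emph{external} part (edges between fairlets), and I will bound the two separately.

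The algorithm will proceed in two stages. Stage~1 computes a fairlet decomposition $\mathcal{F} = \{F_1, \ldots, F_t\}$ of $V$ that approximately minimizes the total number of negative edges trapped inside fairlets. This reduces to a bipartite assignment in which each red node must be matched to $p$ blue partners, with cost equal to the number of negative edges induced; for $p=1$ this is a minimum-weight perfect matching, and for general $p$ it is solved (or approximated) by a min-cost flow whose feasibility region is exactly the set of fairlet decompositions. Stage~2 builds a reduced correlation clustering instance $\widetilde{G}$ on the fairlets by labeling each super-edge $(F_i, F_j)$ via a majority vote over the $(p+1)^2$ original labels between $F_i$ and $F_j$, runs a constant-factor correlation clustering algorithm (e.g., the Pivot algorithm) on $\widetilde{G}$, and lifts the resulting super-clustering by placing all nodes of a fairlet into the cluster of its super-node. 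Fairness is immediate because each output cluster is a union of fairlets and hence inherits the $1{:}p$ ratio.

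The approximation analysis charges the cost of the output clustering against $OPT$ in two pieces. The internal piece is bounded by Stage~1, using the fact that $OPT$'s own fairlet decomposition is feasible for our assignment problem and has internal negative-edge cost at most $OPT$. For the external piece, the $O(p^2)$ factor enters through two sources: each super-edge represents at most $(p+1)^2 = O(p^2)$ original edges, so a single super-edge disagreement lifts to at most $O(p^2)$ original disagreements; and the projection of $OPT$'s clustering onto $\widetilde{G}$ serves as an upper bound on the optimal super-graph cost, since any super-edge mislabel in that projection (by definition of majority) already corresponds to $\Omega((p+1)^2)$ real disagreements in $OPT$. Combining Stage~1, the constant-factor guarantee of the base algorithm on $\widetilde{G}$, and the $O(p^2)$ lifting factor yields the claimed $O(p^2)\cdot OPT$ bound.

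The hard part will be the charging argument for the external piece, because the majority labeling of super-edges is lossy and it must be shown that this loss is entirely absorbed by real disagreements of $OPT$. I plan to handle this by a pairwise case analysis on each pair $(F_i,F_j)$ of fairlets, branching on whether $OPT$ places them together and on the majority label of the super-edge; the delicate case is when $OPT$ separates the two fairlets but the super-edge is positive, where I will use the fact that at least half of the $(p+1)^2$ cross edges are already positive disagreements of $OPT$ to absorb the lifted cost of our algorithm merging them. A secondary concern is guaranteeing that Stage~1 is solvable in polynomial time for arbitrary $p$; the min-cost flow reduction handles this, and any sub-optimality there is comfortably absorbed within the $O(p^2)$ factor.
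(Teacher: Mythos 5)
Your high-level plan (decompose into fairlets, then cluster representatives) matches the paper's, but the two places where you depart from it are exactly where the argument breaks. First, your Stage~1 objective --- the number of negative edges trapped inside fairlets --- is too weak. The paper's bipartite weight is not the internal negative-edge count but $w(u,v)=\sum_{z\neq u,v}\mathbbm{1}\{(u,z)\text{ and }(v,z)\text{ have different labels}\}$, i.e.\ the disagreement between the \emph{profiles} of $u$ and $v$ toward the rest of the graph. This is what allows every cross-fairlet disagreement of the output to be charged either to the red--red clustering cost or to the matching cost (via Lemma~\ref{lemma:2colors-general}, $w(M)\le 2p\cdot OPT$). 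Under your objective, a fairlet $\{r,b\}$ whose internal edge is positive contributes $0$ to Stage~1 even if $r$ and $b$ disagree about every other vertex; committing them to one cluster then forces up to $n-2$ disagreements that nothing in your accounting pays for. (Also, for $p\ge 2$ your objective includes blue--blue edges inside a fairlet, a quadratic term that a min-cost flow does not encode, so even solving Stage~1 as stated is not justified; the paper sidesteps this by using a $b$-matching whose cost is linear in the chosen red--blue pairs.)

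Second, the charging of the external piece to $OPT$ does not go through. You want the projection of $OPT$ onto $\widetilde G$ to be a feasible super-clustering whose majority-label mistakes each correspond to $\Omega((p+1)^2)$ real disagreements of $OPT$. But $OPT$'s clusters are unions of \emph{its own} fairlets, not of yours; a fairlet $F_i$ you build may be scattered across several $OPT$ clusters, in which case ``$OPT$ separates $F_i$ and $F_j$'' is undefined and the $\ge (p+1)^2/2$ positive cross edges you invoke in the delicate case need not be cut by $OPT$ at all. In the metric setting of Chierichetti et al.\ the triangle inequality through the fairlet center repairs exactly this mismatch; correlation clustering has no such tool, which is why the paper avoids contraction and majority votes entirely: it runs the correlation-clustering subroutine on the red vertices with their \emph{original} pairwise labels (so that $OPT$ restricted to $V_R$ is a feasible comparison point, giving $OPT_{V_R}\le OPT_G$) and absorbs every remaining disagreement into either that term or $w(M)$ by an explicit case analysis. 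To repair your proof you would need to replace the internal-negative-edge objective by the profile-disagreement weight and drop the majority-vote contraction --- at which point you have essentially reconstructed the paper's argument.
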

Before explaining the general result for handling a ratio of $1:p$, in Section~\ref{sec:warm-up}, we explain a warm-up scenario where the desired ratio of red to blue in each cluster is $1:1$. In this section, the following theorem is proved:
\begin{theorem}
\label{thm:2-color-equal}
Given a complete unweighted graph $G(V,E)$ where edges are labeled positive or negative, and the nodes are either red or blue, with an equal number of red and blue nodes, there exists an algorithm which gives a clustering with equal number of red and blue nodes in each cluster and at most $(3\alpha+4)\cdot OPT$ disagreements, where $\alpha$ is the best approximation ratio for correlation clustering on a complete unweighted graph with minimizing disagreements objective. 
\end{theorem}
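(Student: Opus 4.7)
The plan is to adapt the fairlet-decomposition paradigm of Chierichetti et al.~\cite{chierichetti2017} to correlation clustering; the main difficulty relative to the metric setting is that there is no triangle inequality available, so the cost accounting has to be done purely in terms of edge labels.

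First I would compute a matching-based fairlet decomposition. Since the target ratio is $1{:}1$, each fairlet is a pair of one red and one blue vertex. Construct the complete bipartite graph between red and blue, assign weight $0$ to $(r,b)$ if $rb$ is positive in $G$ and weight $1$ if it is negative, and take a minimum-weight perfect matching $M$. To bound $\mathrm{cost}(M)$, observe that every OPT cluster has equally many red and blue vertices, so picking any perfect red-blue matching inside each OPT cluster individually gives a valid red-blue matching of $G$ whose total weight equals the number of negative edges lying inside OPT clusters and is therefore at most $OPT$; since $M$ is minimum, $\mathrm{cost}(M)\le OPT$.

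Next I would contract each fairlet into a super-vertex to obtain a complete graph $G'$ on $n/2$ vertices, label the super-edge $(F_i,F_j)$ by the majority of the four underlying $G$-edges between them, run the best $\alpha$-approximation for correlation clustering on $G'$ to get a clustering $\mathcal{C}'$, and lift $\mathcal{C}'$ to a clustering $\mathcal{C}$ of $G$ by placing each fairlet whole inside its assigned cluster; $\mathcal{C}$ is fair by construction. The cost decomposes as $D_G(\mathcal{C})=D_{\mathrm{in}}+D_{\mathrm{bet}}$. Intra-fairlet disagreements equal $\mathrm{cost}(M)\le OPT$, accounting for one of the additive $+4$'s. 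For $D_{\mathrm{bet}}$, compare $\mathcal{C}'$ with the majority labels: whenever $\mathcal{C}'$ disagrees with the super-edge label one pays at most three additional disagreements on the four underlying edges (the majority covers at least three of them), and whenever $\mathcal{C}'$ agrees one pays at most the minority count on that pair.

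To bound $D_{G'}(\mathcal{C}')$ itself, I would construct a fairlet-respecting benchmark clustering $\widetilde{\mathcal{C}}$ of $G$ from OPT by rerouting the blue endpoint of every split fairlet into the OPT cluster containing its red partner, and argue through a local charging that $D_G(\widetilde{\mathcal{C}})\le c_1\cdot OPT+c_2\cdot\mathrm{cost}(M)$. Because the $G'$-clustering induced by $\widetilde{\mathcal{C}}$ has $G'$-cost no larger than $D_G(\widetilde{\mathcal{C}})$, it upper bounds $D_{G'}(OPT_{G'})$, and invoking the $\alpha$-approximation guarantee yields $D_{G'}(\mathcal{C}')\le\alpha\cdot D_G(\widetilde{\mathcal{C}})$; combining this with the $3$-to-$1$ blow-up above and the $\mathrm{cost}(M)\le OPT$ bound for $D_{\mathrm{in}}$, and fitting constants, delivers the claimed $(3\alpha+4)\cdot OPT$. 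The hard part will be the local charging for $\widetilde{\mathcal{C}}$: when matching fairlets straddle OPT's clusters in complicated ways, the per-fairlet rerouting cost must be distributed so that no OPT disagreement or negative matching edge is charged more than a constant number of times; once that bookkeeping is executed cleanly the constants collapse to exactly $3\alpha+4$.
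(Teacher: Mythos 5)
There is a genuine gap, and it sits exactly where you flag ``the hard part'': the charging needed to bound the benchmark clustering $\widetilde{\mathcal{C}}$ cannot be carried out with the matching weights you chose. You weight a red--blue pair $(r,b)$ by $0$ or $1$ according only to the sign of the edge $rb$ itself, so $\mathrm{cost}(M)\le OPT$ only controls the number of negative \emph{matched} edges. It says nothing about how similar the neighborhoods of $r$ and $b$ are. When you reroute the blue endpoint of a split fairlet into the OPT cluster of its red partner, the new disagreements you create are governed by the labels of $b$'s edges to third vertices versus $r$'s edges to those same vertices; in the metric setting this is tamed by the triangle inequality, but here nothing bounds it. Concretely, the min-weight matching under your weights may happily pair $r$ with a $b$ such that $(r,b)$ is positive (weight $0$) yet $b$ disagrees with $r$ on every one of the other $n-2$ vertices; rerouting that single $b$ can create $\Omega(n)$ disagreements charged neither to $OPT$ nor to $\mathrm{cost}(M)=0$. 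So the claimed bound $D_G(\widetilde{\mathcal{C}})\le c_1\cdot OPT+c_2\cdot\mathrm{cost}(M)$ fails, and with it the reduction to $OPT_{G'}$.

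The paper's fix is precisely to build the neighborhood information into the matching weight: it sets $w(x,y)=\mathbbm{1}[(x,y)\in E^-]+\sum_{z\neq x,y}\mathbbm{1}[\text{labels of }(x,z),(y,z)\text{ differ}]$, i.e.\ the full increase in disagreements incurred by forcing $x$ and $y$ together, and then proves $w(M)\le 2\cdot OPT$ by observing that each OPT disagreement is counted at most twice in the matching induced by OPT's clusters. With that stronger invariant, every cross-fairlet disagreement caused by a matched pair having different labels to a third vertex is already paid for by $w(M)$, and the remaining disagreements are charged to the $\alpha$-approximate clustering of $V_R$ alone (the paper clusters $V_R$ directly rather than a majority-labeled contraction, using $OPT_{V_R}\le OPT_G$, which also avoids the extra factor-of-four loss hiding in your ``majority covers at least three of the four edges'' step, where a $2$--$2$ tie breaks the count). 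If you replace your $0/1$ weights with the paper's disagreement-profile weights and drop the contraction in favor of clustering the red side directly, your outline collapses onto the paper's argument and the constants work out to $3\alpha+4$.
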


Our results could be generalized to multiple colors and in Section~\ref{sec:general-2-colors-CC}, a glimpse of the proof of the following theorem is provided. We delegate the complete proof to the Appendix.
\begin{theorem}
\label{thm:multiple-colors}
Given a complete unweighted graph $G(V,E)$ where edges are labeled positive or negative, and each node has exactly one of the colors $\{c_1,\cdots, c_{|\mathcal{C}|}\}$, and the ratio of the number of nodes of color $c_1$ to color $c_i$ is $1:p_i$ $(\forall 1 < i\leq |\mathcal{C}|)$, where $p_i \in \mathcal{Z}^{\geq 1}$, there exists an algorithm where the distribution of colors in each cluster is the same as the global distribution, and the total number of disagreemnts is at most $\mathcal{O}( (\max_{i=1}^{|\mathcal{C}|}\{p_i\})^2\cdot|\mathcal{C}|^2)\cdot OPT$.
\end{theorem}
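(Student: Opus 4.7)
My plan is to lift the two-color fairlet decomposition behind Theorem~\ref{thm:2-colors-general} to the multi-color setting. Define a \emph{fairlet} to be a minimum-size group whose color profile matches the input ratios: one node of color $c_1$ together with $p_i$ nodes of each color $c_i$ for $i\geq 2$. A fairlet has size $T = 1 + \sum_{i=2}^{|\mathcal{C}|} p_i \leq |\mathcal{C}|\cdot\max_i p_i$. Given a fair partition of $V$ into fairlets I can contract each fairlet to a super-node and invoke the classical $\alpha$-approximation for standard correlation clustering on the contracted instance; this is the final clustering step.

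The construction of the fairlet partition is the new ingredient. I would process the auxiliary colors $c_2,\ldots,c_{|\mathcal{C}|}$ one at a time; in round $i$ I would invoke the two-color matching routine underlying Theorem~\ref{thm:2-colors-general} on the bipartite subproblem induced by nodes of color $c_1$ and color $c_i$, attaching $p_i$ nodes of color $c_i$ to each $c_1$ node, preferring to keep pairs connected by positive edges together and pairs connected by negative edges apart. After $|\mathcal{C}|-1$ rounds each $c_1$ node has been extended to a complete fairlet, and the resulting partition is fair by construction.

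The error has two additive sources. First, the cost paid while assembling fairlets: a fair optimum itself induces a fairlet partition inside each of its clusters, so the optimum fairlet-assembly cost is at most $OPT$, and each of the $|\mathcal{C}|-1$ rounds loses only an $O(p_i^2)$ factor by Theorem~\ref{thm:2-colors-general}, giving $O((\max_i p_i)^2 |\mathcal{C}|)\cdot OPT$ in aggregate. Second, the cost paid by contract-and-expand: every super-edge corresponds to at most $T^2$ original pairs, so the $\alpha$-approximation on the contracted graph is inflated by a factor of $T^2 = O((\max_i p_i)^2 |\mathcal{C}|^2)$. The expansion factor dominates, yielding the claimed bound. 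The main obstacle is the first analysis: disagreements between two distinct auxiliary colors $c_i$ and $c_j$ that end up inside the same fairlet are not controlled by any single two-color round, and I expect the proof to charge them to both rounds $(i,j)$ that forced the co-membership, which is exactly what produces the extra factor of $|\mathcal{C}|$ beyond the two-color bound.
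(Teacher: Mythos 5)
Your algorithm is essentially the paper's: form per-color minimum-weight $b$-matchings attaching $p_i$ nodes of color $c_i$ to each $c_1$ node (your fairlets are the paper's hyper-nodes), then run one correlation clustering step and expand. Your diagnosis of where the extra $|\mathcal{C}|$ factors come from (pairs of distinct auxiliary colors $c_i,c_j$ inside and across fairlets) also matches the paper's Case~2. The genuine gap is in the clustering step. You propose to contract each fairlet to a super-node and run the $\alpha$-approximation on the contracted instance, claiming the loss is at most $T^2$ because each super-edge corresponds to at most $T^2$ original pairs. This does not go through as stated, for two reasons. First, the contracted instance is inherently a \emph{weighted} correlation clustering instance (a super-edge carries $w^+$ positive and $w^-$ negative original pairs with $w^++w^-=T^2$), so the black-box $\alpha$-approximation for unweighted complete graphs does not apply; and even granting a constant-factor algorithm for such instances, you would be approximating $OPT_{\mathrm{contracted}}$, not $OPT_G$. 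Second, and more importantly, there is no a~priori bound $OPT_{\mathrm{contracted}}\le \mathrm{poly}(T)\cdot OPT_G$: the optimal fair clustering of $G$ need not respect your fairlet partition, so counting pairs per super-edge does not relate the two optima. In the metric fairlet framework of Chierichetti et al.\ this step is rescued by the triangle inequality; correlation clustering has no such inequality, and supplying a substitute is precisely the technical content of the paper's proof.

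The paper sidesteps both issues by running the $\alpha$-approximation not on contracted super-nodes but on the subgraph induced by the $c_1$ nodes with their \emph{original} labels; since that is a vertex-induced sub-instance of $G$, its optimum is at most $OPT_G$, so the cost $C$ of that step satisfies $C\le\alpha\cdot OPT$. Every remaining disagreement (between a $c_1$ node and a non-$c_1$ node, or between two non-$c_1$ nodes, within or across fairlets) is then charged either to a $c_1$--$c_1$ edge (at most $O\bigl(((|\mathcal{C}|-1)p_{\max})^2\bigr)$ charges per edge, paid by $C$) or to a matching edge (at most $O(|\mathcal{C}|\,p_{\max})$ charges per edge, paid by $w(M_i)\le 2p_i\cdot OPT$). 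To repair your argument you would need to carry out this explicit charging, or else prove the missing comparison between the contracted optimum and $OPT_G$ --- which, once written out, reduces to the same charging. A smaller quantitative slip: the fairlet-assembly cost per round is bounded by the matching lemma at $O(p_i)\cdot OPT$, not by the full $O(p_i^2)$ of Theorem~\ref{thm:2-colors-general}; the quadratic dependence on $p_{\max}$ arises only in the charging for the clustering step.
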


In Section~\ref{sec:hardness}, we prove NP-hardness of fair correlation clustering problem on complete unweighted graphs even for $2$ colors. Note that the hardness result does not directly follow from the hardness result of the original correlation clustering.

In Section~\ref{sec:over-representation}, we consider the fairness model studied by Ahmadian et al.~\cite{ahmadian2019clustering} for $k$-center problem without over-representation. We are inspired by their definition of over-representation, and show the following theorem holds:
\begin{theorem}
\label{thm:over-representation}
Given a complete unweighted graph $G(V,E)$ where edges are labeled positive or negative, and nodes are colored red or blue, and two ratios $1:p, 1:q$ where $p,q\in \mathcal{Z}^{\geq 1}, p\leq q$, where ratio of the total number of red nodes to the total number of blue nodes is between $1:q$ and $1:p$, there exists an algorithm which gives a clustering where the ratio of number of red nodes to blue nodes in each cluster is between $1:q$ and $1:p$, and the total number of disagreements is at most $\mathcal{O}(q^2)\cdot OPT$.
\end{theorem}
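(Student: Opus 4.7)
The plan is to reduce the over-representation setting to the exact-ratio algorithm that proves Theorem~\ref{thm:2-colors-general} by replacing its rigid $1{:}p$ pairing with a flexible $b$-matching. A cluster satisfies the relative bound ``ratio between $1{:}q$ and $1{:}p$'' if and only if every red node in the cluster is accompanied by between $p$ and $q$ blue nodes in that same cluster. Globally, the hypothesis that the total red-to-blue ratio lies in $[1{:}q,1{:}p]$ gives $|R|p \le |B| \le |R|q$, which is exactly the feasibility condition for the transportation problem in which every red node receives between $p$ and $q$ units of flow and every blue node contributes exactly one unit.

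First, I would compute a minimum-cost $b$-matching $M$ on the complete bipartite graph between red and blue nodes, assigning cost $0$ to positive edges and cost $1$ to negative edges, which can be done in polynomial time by standard min-cost flow. An optimal clustering for the fair problem already satisfies the ratio constraint inside every cluster, hence supports a feasible $b$-matching using only edges within its own clusters; each negative edge such a $b$-matching uses is a disagreement already paid by OPT, so $\mathrm{cost}(M)\le OPT$.

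Next, I would merge each red node together with the blue nodes assigned to it by $M$, forming super-nodes of size between $p+1$ and $q+1$. Every super-node individually has ratio in $[1{:}q,1{:}p]$, so any clustering of the super-nodes lifts to a clustering of the original graph that automatically satisfies the desired cluster-wise ratio constraint. I would then run a constant-factor correlation-clustering algorithm on the weighted super-node graph (parallel original edges collapsing to a single signed weight, in the same spirit as the reduction used for Theorem~\ref{thm:2-colors-general}), and finally expand each super-node back into its original vertices to output the final clustering.

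The main obstacle I anticipate is the charging in this last step. Because super-nodes now have variable sizes up to $q+1$, each super-node contributes up to $\binom{q+1}{2}=\mathcal{O}(q^2)$ internal original edges, and each pair of super-nodes contributes up to $(q+1)^2=\mathcal{O}(q^2)$ original edges. I would charge negative edges inside a super-node either to the $b$-matching cost (itself bounded by $OPT$) or to negative in-cluster edges of OPT, and charge original edges between two distinct super-nodes to the corresponding super-node disagreement with an $\mathcal{O}(q^2)$ multiplicative loss. This mirrors the $\mathcal{O}(p^2)$ analysis behind Theorem~\ref{thm:2-colors-general} with $q$ playing the role of $p$, and yields the claimed $\mathcal{O}(q^2)\cdot OPT$ bound; the delicate point is verifying that non-uniform super-node sizes do not break the per-super-node amortization used in the exact-ratio case.
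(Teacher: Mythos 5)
Your overall skeleton is the paper's: relax the pairing to a $b$-matching in which every red node has degree between $p$ and $q$ and every blue node has degree $1$, derive feasibility inside each optimal cluster from $n_r\cdot p\le n_b\le n_r\cdot q$ (assign $p$ distinct blue nodes to each red node, then distribute the leftovers up to degree $q$), form hyper-nodes, cluster, and charge with an $\mathcal{O}(q^2)$ loss. However, there is a genuine gap in the costs you place on the bipartite graph. You assign cost $0$ to positive red--blue edges and cost $1$ to negative ones, so that $\mathrm{cost}(M)\le OPT$. The paper instead weights each candidate pair by $w(u,v)=\sum_{z\neq u,v}\mathbbm{1}\left[(u,z)\text{ and }(v,z)\text{ have different labels}\right]$, i.e., by the number of third vertices on which $u$ and $v$ disagree, and proves $w(M)\le 2q\cdot OPT$ (Lemma~\ref{lemma:2colors-over-representation}). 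This heavier weight is what makes the charging work: Cases 1.1 and 2.1 pay for a disagreement on $(r_1,b_3)$ or $(b_1,b_3)$ out of the matching cost of $(r_2,b_3)$ or $(r_1,b_1)$, which is possible only because that cost already counts the disagreements of the matched pair with third vertices. With sign-only costs, a minimum-cost matching is free to pair a red and a blue node joined by a positive edge that disagree on almost every other vertex, and merging them forces $\Omega(n)$ unavoidable disagreements that no part of your budget absorbs. Concretely, take two all-positive cliques, each with $n/2$ red and $n/2$ blue nodes, all cross edges negative except two positive red--blue edges crossing in opposite directions: the fair optimum is $2$, yet a cost-$0$ matching exists that pairs across the cliques, and each resulting super-node disagrees with essentially every third vertex. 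This failure mode is exactly the paper's \texttt{uFairCC} baseline, which the experiments show performing strictly worse.

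A second, smaller issue is the clustering step. You propose a constant-factor algorithm on the weighted super-node graph obtained by collapsing parallel edges, but constant-factor approximations for minimizing disagreements are known only for complete unweighted instances, and the collapsed graph carries genuinely non-uniform edge masses (between roughly $p^2$ and $q^2$ parallel edges per super-node pair), so this subroutine is not available off the shelf. The paper sidesteps this by running the unweighted $\alpha$-approximation on the red vertices alone, assigning each blue node to its matched red node's cluster, and then charging every blue-incident disagreement either to $w(M)$ (at most $2q$ times its weight) or to a red--red disagreement (at most $q^2+2q$ times), which is the source of the $\mathcal{O}(q^2)$ factor. Your feasibility argument for the $[p,q]$ degree bounds is correct and matches the paper; the two points above are what need repair.
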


We can extend Theorem~\ref{thm:over-representation} to the case with multiple colors.
\begin{theorem}
\label{thm:over-representation-multiple-colors}
Given a complete unweighted graph $G(V,E)$ where edges are labeled positive or negative, and each node has exactly one of the colors $\{c_1,\cdots,c_{\mathcal{C}}\}$, and two ratios $1:p_i, 1:q_i$ for each color $c_i$ where $p_i,q_i\in \mathcal{Z}^{\geq 1}, p_i\leq q_i$, where ratio of the total number of nodes of color $c_1$ to the total number of nodes of color $c_i$ needs to be between $1:q_i$ and $1:p_i$, there exists an algorithm which gives a clustering where $ \forall 1 < i \leq |\mathcal{C}|$, the ratio of number of nodes of color $c_1$ to color $c_i$ in each cluster is between $1:q_i$ and $1:p_i$, and the total number of disagreements is at most $\mathcal{O}((\max_{i=1}^{|\mathcal{C}|}\{q_i\})^2)\cdot OPT$.
\end{theorem}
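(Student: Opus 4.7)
The plan is to extend the fairlet-based strategy that underlies Theorem~\ref{thm:over-representation} from two colors to an arbitrary palette. I define a \emph{multi-color fairlet} to be a vertex subset containing exactly one vertex of color $c_1$ together with, for each $i>1$, some $r_i$ vertices of color $c_i$ with $p_i\le r_i\le q_i$. Such a fairlet is individually feasible, and any disjoint union of fairlets is also feasible: the ratio of $c_1$-vertices to $c_i$-vertices in the union is a weighted average of ratios each lying in $[1/q_i,\,1/p_i]$, hence in the same interval. Consequently, once a fairlet decomposition is fixed, it suffices to contract each fairlet to a single super-node and produce \emph{any} clustering of the contracted graph.

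The second step is to compute a good fairlet decomposition. I would cast this as a min-cost assignment: treat every $c_1$-vertex as a center that must receive $r_i$ vertices of color $c_i$ (with $r_i\in[p_i,q_i]$ chosen globally to respect the input cardinalities, which is feasible by the theorem's hypothesis on the overall color ratios), and set the cost of assigning $v$ to a center $u$ using the disagreement cost the pair would incur together. This is solvable in polynomial time by min-cost flow, and crucially the optimal fairlet cost is at most the cost of the fairlet decomposition extracted from $OPT$ itself.

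The third step, which I expect to be the main obstacle, is to bound the cost of that reference decomposition by $\mathcal{O}((\max_i q_i)^2)\cdot OPT$. The argument parallels the two-color case of Theorem~\ref{thm:over-representation}: inside each cluster $C$ of $OPT$ the color ratios already satisfy the required ranges, so $C$ can be carved into fairlets of size $\mathcal{O}(\max_i q_i)$; the pairwise disagreement charged inside each extracted fairlet is bounded by a quadratic-in-size blow-up of the disagreements charged to $C$ in $OPT$. The new difficulty compared with the two-color setting is that a single fairlet must simultaneously honor $|\mathcal{C}|-1$ range constraints; I would handle this by a greedy/flow-based extraction that pulls out one $c_1$-vertex together with $r_i$ nearest $c_i$-vertices (in the disagreement metric) for each $i$, and then argue, as in the two-color analysis, that nearest-neighbor replacements inside $C$ only improve the bound.

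Finally, I would contract the fairlets, invoke a state-of-the-art correlation clustering approximation on the resulting weighted instance, and add the fairlet cost to the algorithm's output cost. Since both terms are $\mathcal{O}((\max_i q_i)^2)\cdot OPT$ (the contracted optimum is at most $OPT$ plus the fairlet cost, using the same charging as for Theorem~\ref{thm:over-representation}), the stated approximation ratio follows; feasibility is automatic from the fairlet property established in the first step.
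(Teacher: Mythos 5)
Your overall strategy is essentially the paper's: the ``multi-color fairlets'' you describe are exactly the hyper-nodes the paper builds by computing, for each $i>1$, a min-cost $b$-matching $M_i$ between $V_{c_1}$ and $V_{c_i}$ with degree in $[p_i,q_i]$ on each $c_1$-vertex and degree $1$ on each $c_i$-vertex (your joint min-cost assignment decomposes into these independent per-color matchings, since the cost is additive over colors and the constraints are per-color). Your feasibility argument via the mediant inequality and your reference decomposition extracted from $OPT$ (giving $w(M_i)\le 2q_i\cdot OPT$, the analogue of Lemma~\ref{lemma:2colors-over-representation}) also match the paper.

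There are two concrete issues. First, your final step clusters the \emph{weighted contracted} instance obtained by merging each fairlet into a super-node. That instance is a complete graph with general edge weights, for which the best known correlation-clustering approximation is $O(\log n)$, not a constant; invoked as stated, this destroys the claimed $\mathcal{O}(q_{\max}^2)$ ratio. The paper avoids this by running the constant-factor algorithm for \emph{unweighted} complete graphs on the induced subgraph of $c_1$-vertices only (whose optimum is at most $OPT$ since it is an unconstrained sub-instance), and then charging every edge incident to a non-$c_1$ vertex either to a matching edge or to a $c_1$--$c_1$ edge. Second, that charging argument is precisely where the approximation factor is determined, and you defer it: the paper's Cases 1--3 show each $c_1$--$c_1$ edge absorbs up to $((|\mathcal{C}|-1)q_{\max})^2+2q_{\max}$ charges and each matching edge up to $O(|\mathcal{C}|q_{\max})$ charges, yielding $\mathcal{O}(q_{\max}^2|\mathcal{C}|^2)\cdot OPT$ overall. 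Note this is also the bound your own sketch would produce, since your fairlets have size $\Theta(|\mathcal{C}|q_{\max})$ and you invoke a ``quadratic-in-size blow-up''; the $|\mathcal{C}|^2$ factor appears in the paper's proof even though it is absent from the theorem statement, so your claim of $\mathcal{O}((\max_i q_i)^2)$ with no dependence on $|\mathcal{C}|$ is not justified by the argument you outline.
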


In Section~\ref{sec:exp}, we perform an extensive evaluation on real world datasets to demonstrate the unfair results generated by the classical correlation clustering algorithm and  evaluate the ability of our algorithm to generate fair clusters without much loss of solution quality.

%Next we show how to get approximation algorithms for minimizng disagreements for the case with relative upper bound and lowers for each color in each cluster.  Relative upper and lower bounds make sure there is no over-respresentation or under-representation of any color in the communities(clusters) formed. Our approach is similar to what we did to make sure the distribution of colors in each cluster is the same as the global distribution with a minor modification; 
%is inspired by the approach proposed by Puleo and Milencovic~\cite{puleo2015} for correlation clustering problem with an absolute upper bound on the size of each cluster.

\section{Related Work}
Introduced by Bansal, Blum and Chawla in 2004~\cite{bansal2004correlation}, correlation clustering has received tremendous attention in the past decade. 
%It was introduced by Bansal, Blum and Chawla~\cite{bansal2004correlation}, 
The problem is NP-complete, and a series of follow-up work have resulted in better approximation ratio, generalization to weighted graphs etc.~\cite{ailon2008aggregating, charikar2005clustering, chawla2015near}.
This problem captures a wide range of applications including clustering gene expression patterns~\cite{ben1999clustering,10.1007/978-3-540-79228-4_39}, and the aggregation of inconsistent information~\cite{filkov2004integrating}.

The research in fairness in machine learning has focused on two main directions, coming up with proper notions of fairness and designing fair algorithms. The first direction includes results on statistical parity~\cite{kamishima2012fairness}, disparate impact~\cite{feldman2015certifying}, and individual fairness~\cite{dwork2012fairness}. Second direction includes a bulk of work including fair rankings~\cite{celis2017ranking}, fair clusterings~\cite{chierichetti2017, rosner2018privacy, bercea2018cost, bera2019fair, ahmadian2019clustering}, fair voting~\cite{celis2017multiwinner}, and fair optimization with matroid constraints~\cite{chierichetti2019matroids}.

Puleo and Milencovic~\cite{puleo2018correlation} studied a new version of correlation clustering, where the objective was to make sure the maximum number of disagreements on each vertex is minimized. Their motivation was to make sure individuals are treated fairly. The result was improved by Charikar et al.~\cite{charikar2017local}. In a subsequent work, Ahmadi et al.~\cite{ahmadi2019min} studied the local correlation clustering problem where the objective was to make sure the maximum number of disagreements on each cluster is minimized, and the communities are treated fairly. Their result was improved by Kalhan et al.~\cite{kalhan2019improved}.

Chierichetti et al.~\cite{chierichetti2017} extended the notion of disparate impact to k-center and k-median, and studied these problems for the case of two groups. Their result was later generalized to multiple groups by R\"{o}sner and Schmidt~\cite{rosner2018privacy}. In this work, we generalize the notion of disparate impact to correlation clustering for multiple colors, and our goal is to make sure the distribution of colors in each cluster is identical to the global distribution. Next, we extend the model introduced by Ahmadian et al.~\cite{ahmadian2019clustering} on $k$-center to correlation clustering to show no color is over or under represented in each cluster.

\section{Preliminaries}
In the correlation clustering problem, an input graph $G(V,E)$ is given where each edge is labeled positive or negative. The goal is to obtain a clustering that minimizes the total number of disagreements, defined as the number of negative edges trapped inside a cluster plus positive edges that are cut between clusters.

Inspired by the recent developments on fairness in machine learning, we define a fair variant of correlation clustering problem. In fair correlation clustering problem, given an input graph $G(V,E)$ each node has a color from set of colors $\{c_1,\cdots, c_{|\mathcal{C}|}\}$. The desired ratio of the number of nodes from color $c_1$ to color $c_i$ is $1:p_i, \forall 1\leq i \leq |\mathcal{C}|$ where $\forall 1\leq i \leq |\mathcal{C}|:p_i \in \mathcal{Z}^{\geq 1}$. The goal is to find a clustering which gaurantees the desired ratios in each cluster while minimizing the total number of disagreements. First, we study fair correlation clustering problem for $2$ colors and then extend it to an arbitrary number of colors. 

In Section ~\ref{sec:over-representation}, we consider the problem of given an instance of correlation clustering where nodes are either red or blue and the ratio of the number of red nodes to blue nodes is between $1:q, 1:p$, where $p,q \in \mathcal{Z}^{\geq 1}, p\leq q$. The goal is to form a clustering where the ratio of the number of red nodes to blue nodes in each cluster is between $1:q$ and $1:p$. Throughout the paper, we use $OPT$ interchangeably for the optimum solution and minimum number of disagreements.
\section{Warmup: 2 Colors with Ratio 1:1}
\label{sec:warm-up}
\begin{algorithm}[tb]
   \caption{Fair Correlation Clustering }
   \label{alg:fairCC}
\begin{algorithmic}[1]
   \STATE {\bfseries Input:} $G(V=V_B\cup V_R,E=E^+\cup E^-)$
   %\STATE $G'\leftarrow G(V,E')$
   \STATE $E'\leftarrow \phi$
   %\STATE /*Construct a weighted bipartite graph */
   \FOR{$u\in V_B$}
   \FOR{$v\in V_R$}
    \STATE $E'\leftarrow E'\cup (u,v)$
    \STATE $w(u,v) \leftarrow  \sum\limits_{w\in V\setminus \{u,v\}}  \mathbbm{1}_{(u,w)\in E^+, (v,w)\in E^- } + \mathbbm{1}_{(u,w)\in E^-, (v,w)\in E^+ }$
   \ENDFOR
   \ENDFOR
   \STATE $\mathcal{C}'\leftarrow ClassicCorrClust(V_R,E\cap V_R\times V_R)$
   \STATE $M\leftarrow \texttt{min-weight-matching}(V,E',w)$
   \STATE $\forall v\in V_B$, assign $v$ to same cluster as $M(v)$
   \RETURN  $\mathcal{C}'$
   \end{algorithmic}
\end{algorithm}

 The goal is to find a clustering which minimizes the total number of disagreements, and the number of red and blue vertices in each cluster are equal. We show a constant approximation algorithm for this problem. 

Algorithm~\ref{alg:fairCC} presents our approach to generate clusters that obey the fairness constraint while minimizing the total disagreements. First, a weighted bipartite graph from $V_B$ to $V_R$ is constructed (lines 2-8) in the following way: consider a pair of vertices $(x,y)$ where $x\in V_B$ and $y \in V_R$, weight of this edge $w(x,y)$ is initially set to zero. If $(x,y)$ is a negative edge, increase $w(x,y)$ by $1$. For each vertex $z\in V\setminus \{x,y\}$, if the labels of edges $(z,x)$ and $(z,y)$ are different, increase $w(x,y)$ by $1$ (line 6). In this way, $w(x,y)$ shows how much the total disagreement increases if $x$ and $y$ are clustered together.  
Run an $\alpha$-approximation algorithm for minimizing disagreements on $V_R$ with no fairness constraints (line 9). Since $V_R\subset G$ and there are no fairness constraints on $V_R$, $OPT_{V_R}\leq OPT_G$.
Next, find a minimum weighted matching $M$ from $V_B$ to $V_R$ (line 10). In the end assign each blue vertex to the same cluster as its matched red node (line 11), and return the new clustering.
Let $w(M)$ denote weight of this matching. First, we show the following lemma holds:

\begin{lemma}
 $w(M)\leq 2\cdot OPT_G$.
\end{lemma}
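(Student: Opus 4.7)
The plan is to exhibit a specific matching whose weight is at most $2 \cdot OPT_G$, and then use the fact that $M$ is the minimum weight matching to conclude. Let $OPT$ denote a fixed optimal fair clustering of $G$ achieving $OPT_G$ disagreements. Since in $OPT$ every cluster contains an equal number of red and blue vertices, I can pick, inside each cluster of $OPT$, an arbitrary perfect matching between its blue and red vertices; taking the union over clusters yields a perfect matching $M^*$ from $V_B$ to $V_R$. Because $M$ minimizes weight, it suffices to show $w(M^*) \le 2 \cdot OPT_G$.

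The core of the argument is a charging scheme. Fix a matched pair $(x,y) \in M^*$ with $x \in V_B$ and $y \in V_R$; by construction $x$ and $y$ lie in the same $OPT$ cluster. Inspect the contributions to $w(x,y)$ term by term. The contribution coming from $(x,y)$ being a negative edge is itself a disagreement in $OPT$, since $x$ and $y$ are co-clustered. For each other vertex $z$ that contributes $1$ to $w(x,y)$, the edges $(x,z)$ and $(y,z)$ have opposite labels. If $z$ lies in the same $OPT$ cluster as $x,y$, then the unique negative edge among $\{(x,z),(y,z)\}$ is a disagreement in $OPT$; if $z$ lies in a different cluster, then the unique positive edge among the two is a disagreement in $OPT$. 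Either way, $z$ charges exactly one disagreement of $OPT$ that is incident to $x$ or to $y$.

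Summing over the matched pairs, $w(M^*)$ is bounded by the total number of such charges. I then argue that any disagreement edge $(a,b) \in OPT$ is charged at most twice: once, possibly, by the matched pair containing $a$ (via the role $z = b$), and once, possibly, by the matched pair containing $b$ (via the role $z = a$). For the degenerate case $(a,b) \in M^*$ itself (which forces $(a,b)$ to be a negative in-cluster edge), the initial $+1$ term in $w(a,b)$ accounts for it exactly once, still within the factor of two. Combining these observations yields $w(M^*) \le 2\cdot OPT_G$, and hence $w(M) \le w(M^*) \le 2 \cdot OPT_G$.

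The one step that needs care is the charging uniqueness argument: I need to confirm that, for a fixed disagreement edge $(a,b)$ of $OPT$, no matched pair other than the ones containing $a$ or $b$ can possibly account for it, and that the two natural charging pairs cannot double-count the same occurrence within a single $w(x,y)$ term. This reduces to a short case analysis on whether $a,b$ are co-clustered in $OPT$ and on the label of $(a,b)$, which I expect to be the main (but mild) obstacle.
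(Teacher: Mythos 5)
Your proof is correct and follows essentially the same route as the paper's: construct a perfect matching inside the clusters of the optimal solution, bound its weight by $2\cdot OPT_G$ via a double-counting/charging argument (each disagreement of $OPT$ is accounted for at most twice, once through each of its endpoints' matched partners), and invoke minimality of $M$. Your write-up is in fact slightly more explicit than the paper's in verifying that every unit of $w(M^*)$ maps to an actual disagreement of $OPT$, and the remaining case analysis you flag does go through exactly as you anticipate.
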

\begin{proof}
Consider the optimal solution, and construct an arbitrary matching where both endpoints of each matched edge are in the same cluster. Call this matching $M'$ and assign weights to each matched edge as described in Algorithm~\ref{alg:fairCC}.  First, we show $w(M')\leq 2\cdot OPT_G$. Consider an edge $(v_i, v_j)$, if they are matched by the $M'$, $M'$ and $OPT$ are paying the same cost for this edge, since the matching is paying for this edge if and only if it is a negative edge trapped inside a cluster, in  this case $OPT$ is also paying for it.
Assume the case where $v_i, v_j$ are not matched in $M'$. Let's assume $v_i$ is matched to $v_{i'}$, and $v_j$ is matched to $v_{j'}$. The matching $M'$ could pay for the edge $(v_i, v_j)$ at most twice, once if the edges $(v_i, v_j)$ and $(v_{i'}, v_j)$ have disagreeing labels, and once if $(v_i, v_j)$ and $(v_i, v_{j'})$ have disagreeing labels. Therefore, $w(M')\leq 2\cdot OPT$. Therefore $w(M)\leq w(M')\leq 2\cdot OPT$ since we are finding a min cost matching $M$.
\end{proof}
%We can see $C\leq 2\cdot OPT_G$ since each disagreement is counted at most twice in $C$. 

Now consider the $\alpha$-approximation for $V_R$, and put each blue vertex in the same cluster as their matched red vertex (line 11). In the following we show this algorithm gives a $3\alpha+4$-approximation.

\subsection{Analysis:}
Consider vertices $x, x'\in V_R$ and $y, y'\in V_B$, where $(x,y)\in M$ and $(x', y')\in M$ (Figure~\ref{fig:warm-up}). In the following, we show that we can pay for all the disagreements within our $(3\alpha+4)\cdot OPT$ budget.

%\begin{figure}[h]
%\centering
%\includegraphics[width=0.7\columnwidth]{../faircc}
%\vspace{-7pt}
%\caption{Warm-up example.}
%\label{fig:warm-up}
%\end{figure}

\begin{figure}
	\centering
	\subfigure[Warm-up example]{\includegraphics[width=0.45\columnwidth]{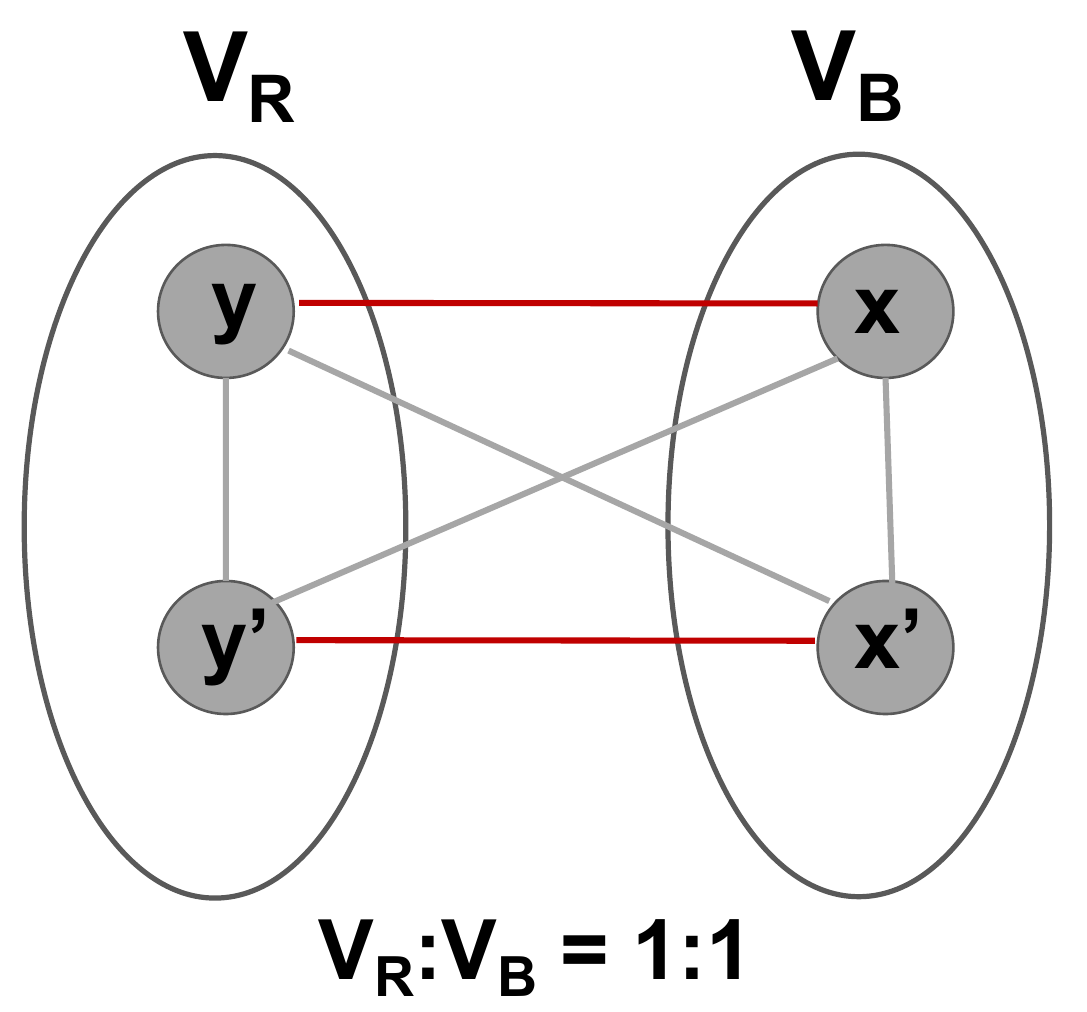}\label{fig:warm-up}}
	\subfigure[General case]{\includegraphics[width=0.45\columnwidth]{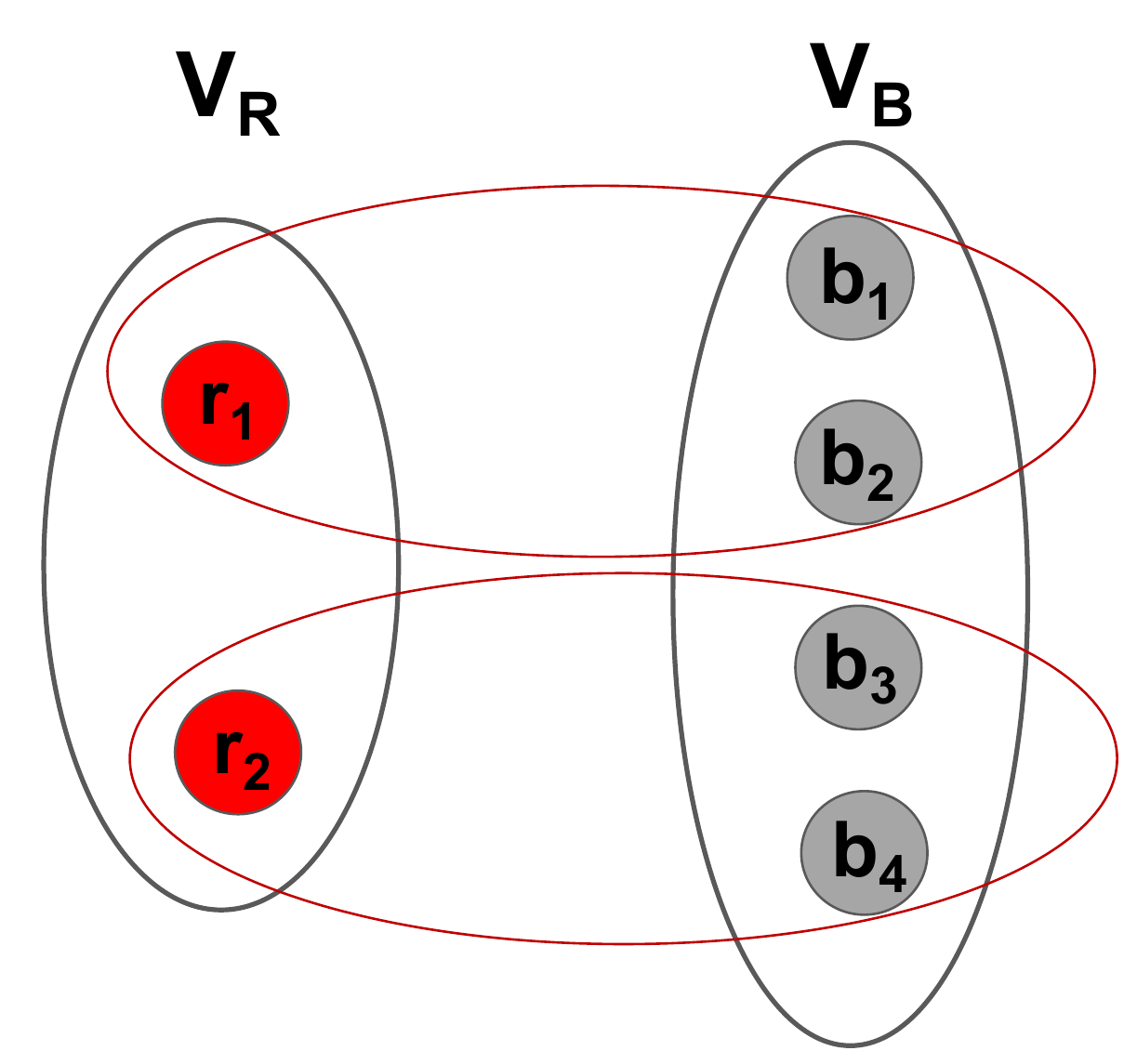}\label{fig:general-CC}}
	\vspace{-8pt}
	\caption{Example set of matched nodes for our algorithm.
	} 
	\label{fig:illus-1}
	\vspace{-14pt}
\end{figure}

\textbf{Case 1:} If a disagreement happens on a matched edge $(x,y)$, meaning $(x,y)$ is a negative edge, it is counted in $w(M)$.

\textbf{Case 2:} If a disagreement is on $(x', y)$, two cases may arise:
\begin{itemize}
\item{Case 2.1:} If $(x', y)$ and $(y', y)$ have the same label, since $x'$ and $y'$ are in the same cluster, having a disagreement on $(x', y)$ implies having a disagreemt on $(y', y)$. Therefore, if we double the budget needed to pay for the mistakes in $V_R$, we can also pay for the mistakes of this type.
\item {Case 2.2:} If $(x', y)$ and $(y', y)$ do not have the same label, we are making exactly one mistake on these two edges and the min cost matching $M$ is paying for it.
\end{itemize}

\textbf{Case 3:} If a disagreement occurs on $(x,x')$, the following cases might happen:
\begin{itemize}
    \item{Case 3.1:} If $(x,x')$, $(x',y)$ have different labels, we are making exactly one mistake on these two edges and the min cost matching $M$ is paying for it.
    \item{Case 3.2:} If $(x',x)$ and $(x',y)$ have the same labels, two cases might happen:
    \begin{itemize}
        \item{Case 3.2.1:} $(x',y)$ and $(y,y')$ have the same labels. In this case, $(x',x)$ and $(y',y)$ have the same labels, and $x'$, $y'$ are in the same cluster, also $x, y$ are in the same cluster. Therefore, there is a disagreement on $(x',x)$ if and only if there is a disagreement on $(y, y')$. By adding another $\alpha\cdot OPT$ to the budget, we can pay for these types of disagreements.
        \item {Case 3.2.2:} $(x', y)$ and $(y', y)$ have different labels. In this case, exactly one mistake occured on $(x',y)$ and $(y,y')$ and matching was paying for it. If no mistakes occures on $(x',y)$, there will be no mistake on $(x,x')$ as well. 
        If a disagreement happens on $(x',y)$, then a disagreement occurs on $(x',x)$. Since $M$ is paying for the disagreement occured on $(x',y)$, doubling the cost of $M$ in the budget pays for the mistake on $(x,x')$ as well.
    \end{itemize}
\end{itemize}

At the end, we get a $3\alpha+4$-approximation algorithm, which complete proof of Theorem~\ref{thm:2-color-equal}.
\section{Generalization}
In this section, we generalize the previously considered model to allow the ratio of colors to be $1:p$ where $p\in \mathcal{Z}^{\geq 1}$ in Section~\ref{sec:general-2-colors-CC} and allow more than 2 colors in Section~\ref{sec:multiple-colors}.
\subsection{2 Colors with Ratio 1:p} 
%In this case, we assume the desired ratio between red and blue vertices in each cluster is $1:p$ where  $p$ is an integer.

\label{sec:general-2-colors-CC}
%\begin{figure}[ht!]
%\centering
%\includegraphics[width=8cm]{bmatch}
%\includegraphics[width=0.6\columnwidth]{bmatch}
%\caption{General algorithm example.}
%\label{fig:general-CC}
%\vspace{-5pt}
%\end{figure}

% First, a weighted bipartite graph $G'$ from $V_B$ to $V_R$ is constructed in the following way: consider a pair of vertices $(x,y)$ where $x\in V_B$ and $y \in V_R$, weight of this edge $w(x,y)$ is initially set to zero. If $(x,y)$ is a negative edge, increase $w(x,y)$ by $1$. For each vertex $z\in V\setminus \{x,y\}$, if the labels of edges $(z,x)$ and $(z,y)$ are different, increase $w(x,y)$ by $1$. In this way, $w(x,y)$ shows how much the total disagreement increases if $x$ and $y$ are clustered together. 
%First, a weighted bipartite graph is constructed from $V_R$ to $V_B$ in the same way it was constructed in Section~\ref{sec:warm-up}.
%Next, run an $\alpha$-approximation algorithm for minimizing disagreements on $V_R$. It could be seen that $OPT_{V_R}\leq OPT_G$.
%Next, find a minimum weighted b-matching $M$ from $V_R$ to $V_B$ where the degrees of vertices in $V_R$ needs to be $p$, let $w(M)$ denote weight of this matching. 
The algorithm for this case is similar to Algorithm~\ref{alg:fairCC} with a minor difference; the matching $M$ constructed from $V_R$ to $V_B$ is a $b$-matching where the degrees of vertices in $V_R$ are $p$, and the degrees of vertices in $V_B$ are $1$. Let $w(M)$ denote weight of this matching. 
\begin{lemma}
\label{lemma:2colors-general}
$w(M)\leq 2p\cdot OPT$.
\end{lemma}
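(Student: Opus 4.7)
The plan is to mirror the warm-up argument by exhibiting a feasible $b$-matching $M'$ whose total weight is bounded in terms of $OPT$. Since $M$ is the minimum-weight $b$-matching of the prescribed degrees, the desired bound will then follow from $w(M) \leq w(M')$.

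To construct $M'$, I would fix an optimal fair clustering and, inside each cluster $C$ (which by feasibility contains red and blue vertices in the ratio $1:p$), greedily match each red vertex to $p$ of the blue vertices in $C$ so that every blue vertex is matched to exactly one red vertex. Taking the union over all clusters yields a $b$-matching $M'$ with the required degrees on each side, and crucially every matched pair lies in the same OPT cluster.

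Next, I would bound $w(M')$ by a charging argument against the set $D$ of OPT disagreements. For a matched pair $(x,y) \in M'$, the weight $w(x,y)$ counts, for each $z \notin \{x,y\}$, one unit whenever $(x,z)$ and $(y,z)$ carry disagreeing labels (plus, as in the warm-up, one unit if $(x,y)$ itself is a negative edge trapped inside the OPT cluster). Because $x$ and $y$ lie in the same optimal cluster, whenever $(x,z)$ and $(y,z)$ disagree exactly one of those two edges must be a disagreement in OPT: if $z$ is inside $C$ the negative one is trapped, and if $z$ is outside $C$ the positive one is cut. Each such event can therefore be charged to a distinct edge of $D$, yielding
\[
w(M') \;\leq\; \sum_{(x,y) \in M'} \Bigl( \sum_{z \neq x,y} \bigl( \mathbbm{1}[(x,z) \in D] + \mathbbm{1}[(y,z) \in D] \bigr) + \mathbbm{1}[(x,y) \in D] \Bigr).
\]

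The final step is to verify that no edge of $D$ is charged more than $2p$ times by a brief case analysis on the colors of the endpoints of $e = (a,b) \in D$. A red-red edge is charged only through the first inner sum, $p$ times for each endpoint playing the role of $x$, for a total of $2p$. A blue-blue edge is charged only through the second sum, once for each endpoint playing the role of $y$, for a total of $2$. A bichromatic edge is charged at most $p$ times from its red endpoint serving as some $x$, at most once from its blue endpoint serving as some $y$, and at most once from being a matched disagreement; the bookkeeping shows these maxima cannot be attained simultaneously, since the exclusion $z \neq x,y$ removes one charge whenever $e \in M'$, giving at most $p+1 \leq 2p$. In every case the count is at most $2p$, so $w(M') \leq 2p \cdot |D| = 2p \cdot OPT$, which combined with $w(M) \leq w(M')$ yields the lemma. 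The main delicacy lies in the bichromatic case, where one must track the $z \neq x,y$ exclusion carefully to avoid a spurious $p+2$ bound; the asymmetric degrees of the $b$-matching ($p$ on red, $1$ on blue) are exactly what prevents this from blowing past $2p$.
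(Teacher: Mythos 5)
Your proposal is correct and follows essentially the same route as the paper: build a feasible $b$-matching $M'$ inside the clusters of $OPT$, bound $w(M')$ by charging each contribution to an $OPT$ disagreement, verify each disagreement is charged at most $2p$ times, and conclude via $w(M)\leq w(M')$. Your explicit case analysis by endpoint colors (red--red gives $2p$, blue--blue gives $2$, bichromatic gives at most $p+1$) actually supplies the counting detail that the paper's proof only asserts.
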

\begin{proof}
Consider the $OPT$ solution, and construct an arbitrary $b$-matching $M'$ from red nodes to blue nodes where degree of each red node is $p$, and degree of each blue node is $1$, and for each matched edge both its endpoints belong to the same cluster. Call this matching $M'$. First, we show $w(M')\leq 2p\cdot  OPT_G$.
Consider an edge between arbitrary vertices $v_i$ and $v_j$, such that they are not matched in $M'$. If a disagreement occurs on the edge between $(v_i, v_j)$ in $OPT_G$, this disagreement could have been counted at most $2p$ times in $w(M')$. Therefore $w(M')\leq 2p\cdot OPT_G$. Since $M$ is a min cost $b$-matching satisfying degree constraints:
$w(M)\leq w(M')\leq 2p\cdot OPT_G$
\end{proof}
The algorithm is as following: run an $\alpha$-approximation correlation clustering on a subset of $G$ which includes the red vertex from each hyper-node (i.e. a collection of matched nodes).
In the following, we show we can pay for all the disagreements within a $\Big((p^2+2p)\cdot\alpha + 4p^2\Big)\cdot OPT$ budget.

\textbf{Case 1:} In Figure~\ref{fig:general-CC}, consider a disagreement between a red vertex  ($r_1$) and a blue ($b_3$) node from different hyper-nodes. %, for example $r_2$ and $r_4$. 
Two cases might happen:
\begin{itemize}
\item{Case 1.1:} If edges $(r_1, b_3)$ and $(r_1, r_2)$ have disagreeing labels, then cost of the edge $(r_2, b_3)$ counted in the matching  is paying for it.

\item{Case 1.2:} If edges $(r_1, b_3)$ and $(r_1, r_2)$ have the same signs, the disagreement on $(r_1, b_3)$ could be charged to the edge $(r_1, r_2)$. The number of such edges charged to $(r_1, r_2)$ is at most $2p$.
\end{itemize}
\textbf{Case 2:} There exists disagreement between two blue nodes from different hyper-nodes, like $(b_1, b_3)$ in Figure \ref{fig:general-CC}.
\begin{itemize}
    \item Case 2.1: Edges $(b_1, b_3)$ and $(r_1, b_3)$ are disagreeing. Then the cost of edge $(r_1, b_1)$ included in the cost of the matching  is paying for it.
    \item Case 2.2: Edges $(b_1, b_3)$ and $(r_1, b_3)$ have the same labels and have different labels with $(r_1, r_2)$. We charge the disagreement on $(b_1, b_3)$ to the edge $(r_2, b_3)$. There are $p$ choices for $b_1$, therefore at most $p$ edges of this type, plus the edge $(r_1, b_3)$ are charged to the edge  $(r_2,b_3)$, whereas $M$ is paying $1$ for the disagreement between $(r_1, r_2)$ and $(r_1, b_3)$. Therefore, we need to account for $p+1$ times the matching cost to account for all edges of this type.
    \item Case 2.3: Edges $(b_1, b_3), (r_1, b_3),(r_1, r_2)$ all have the same labels. There are $p^2$ choices for a pair of blue nodes like $(b_1, b_3)$, and disagreements on these edges could be charged to $(r_1, r_2)$.
\end{itemize}
\begin{algorithm}[tb]
   \caption{Fair Correlation Clustering for Multiple Colors}
   \label{alg:fairCC-multi-color}
\begin{algorithmic}[1]
   \STATE {\bfseries Input:} $G(V=V_{c_1}\cup V_{c_2}\cdots\cup V_{c_{\mathcal{C}}}, E=E^+\cup E^-)$
   %\STATE /*Construct a weighted bipartite graph */
   \FOR{$1<i\leq |\mathcal{C}|$}
   \STATE $E'_i\leftarrow \phi$
   %\STATE $V_i \leftarrow V_1\cup V_{c_i}$
   \FOR{$u\in V_{c_i}$}
   \FOR{$v\in V_{c_1}$}
    \STATE $E'_i\leftarrow E'_i\cup (u,v)$
    \STATE $w(u,v) \leftarrow  \sum\limits_{w\in V_i\setminus \{u,v\}}  \mathbbm{1}_{(u,w)\in E^+, (v,w)\in E^- } + \mathbbm{1}_{(u,w)\in E^-, (v,w)\in E^+ }$
   \ENDFOR
   \ENDFOR
   \STATE $M_i\leftarrow \texttt{min-weight-b-matching}(V_1\cup V_{c_i},E'_i,w)$
   \ENDFOR
   \STATE $\mathcal{C}'\leftarrow ClassicCorrClust(V_{c_1},E\cap V_{c_1}\times V_{c_1})$
   \FOR{$1<i\leq |\mathcal{C}|$}
    \STATE $\forall v\in V_{c_i}$, assign $v$ to same cluster as $M_i(v)$
   \ENDFOR
   \RETURN  $\mathcal{C}'$
   \end{algorithmic}
\end{algorithm}
\textbf{Case 3:} A disagreement between two blue nodes in the same hyper-node, $b_1$ and $b_2$ which means $(b_1,b_2)$ is a negative edge. If $(r_1,b_1)$ is positive then $(r_1,b_2)$'s contribution in the matching cost captures it. Similarly, if $(r_1,b_2)$ is a positive edge then the $(r_1,b_2)$'s contribution in matching cost captures this. If both $(r_1,b_1)$ and $(r_1,b_2)$ are negative edges then we can charge both the edges $1/2$. The total number of times an edge $(r_1,b_1)$ is charged is at most $p-1$ as there can be a maximum of $p-1$ negative edges from $b_1$.% Hence the total number of such edges for each blue vertex is ${p-1}$.

There are a total of $p^2+2p$ charges on edges between red nodes (Cases 1.2 and 2.3) accounting for the total cost to be $(p^2+2p)C$, where $C$ is the correlation clustering objective on red vertices. Similarly, we charge each matched edge at most $p+1$ times their weight in Case 2.2 and at most $p-1$ times their weight in Case 3, the total contribution to the final objective is $2p\cdot w(M)$.  All charges required to handle cases 1.1 and 2.1 do not add any additional cost to the objective as they are already accounted for edges considered in $2p\cdot w(M)$.
Hence, the total objective value of returned clusters is at most:
\[(p^2+2p)C + (p+1)\cdot w(M) \leq \Big((p^2+2p)\cdot\alpha + 2p\times 2p\Big)\cdot OPT\]
Therefore the approximation ratio is $\mathcal{O}(p^2)$, and this completes proof of Theorem~\ref{thm:2-colors-general}.
\subsection{Multiple Colors}
\label{sec:multiple-colors}
Our results could be extended to the case of multiple colors. Assume there are $\mathcal{C}$ colors $\{c_1, c_2,\cdots, c_{|\mathcal{C}|}\}$, and the ratio of number of nodes of color $c_1$ to color $c_i$ is $1:p_i$ ($\forall 1 < i \leq |\mathcal{C}|$), where $p_i\in \mathcal{Z}^{\geq 1}$. In this case, we get an approximation ratio of $\mathcal{O}( (\max_{i=1}^{|\mathcal{C}|}\{p_i\})^2\cdot|\mathcal{C}|^2)$. Algorithmm~\ref{alg:fairCC-multi-color} is a generalization of Algorithm~\ref{alg:fairCC}. In this algorithm a set of $b$-matchings $\{M_i : 1< i \leq |\mathcal{C}|\}$ are constructed. Each matching $M_i$ is between nodes of color $c_1$ and  $c_i$ and degree of each node of color $c_1$ is $p_i$, and degree of each node of color $c_i$ is $1$.
Analysis of this algorithm is similar to the analysis of the algorithm for $2$ colors, and we delegate the complete proof to Appendix.

\subsection{Avoiding Over-representation}
\label{sec:over-representation}
In this section, we consider the model defined by Ahmadian et al.~\cite{ahmadian2019clustering} for $k$-center problem. Their goal is to make sure given a parameter $0\leq \alpha \leq 1$, maximum fraction of nodes in a cluster having a specific color is at most $\alpha$ times the size of the cluster. We consider the following problem: consider two colors red and blue. Our goal is to make sure the ratio of the number of red nodes to the number of blue nodes in each cluster is between $1:q$ and $1:p$ where $1\leq p \leq q$ and $p, q\in \mathcal{Z}^{\geq 1}$. The algorithm discussed in Section~\ref{sec:general-2-colors-CC} could be modified to handle this variation of the problem in the following way: when finding a minimum cost $b$-matching $M$, put the degree constraint on each red node to be between $p$ and $q$, and let the degree constraint on each blue node to be $1$. Therefore in the minimum cost matching $M$, each connected component has $1$ red node and at least $p$ and at most $q$ blue nodes. The rest of the algorithm is similar to the algorithm discussed in Section~\ref{sec:general-2-colors-CC}. Using a similar analysis, it could be seen the approximation ratio is $\mathcal{O}(q^2)$.
We show a sketch of the proof and discard full proof to the supplementary material.
\begin{proof}[Proofsketch]
Given the optimum solution $OPT$, we show a $b$-matching $M'$ could be constructed where endpoints of each matched edge belong to the same cluster, and the degree of each red node is at least $p$ and at most $q$. In the $OPT$ solution,  in each cluster, the ratio of the number of red to blue nodes is between $1:q$ and $1:p$. Consider a specific cluster $\mathcal{X}$ in the $OPT$ solution, let $n_r, n_b$ denote the number of red and blue nodes in this cluster respectively. Therefore, $n_r\cdot p \leq n_b \leq n_r\cdot q$. Construct a $b$-matching from red nodes to blue nodes in $\mathcal{X}$ as following:  assign $p$ distinct blue nodes to each red node. If any blue nodes are left un-assigned, assign them to any red node with matched degree less than $q$. Since $n_b \leq q\cdot n_r$, a $b$-matching satisfying degree constraints in each cluster of $OPT$ could be constructed. $M'$ is the union of the $b$-matchings formed in all the clusters.
Similar to the proof of Lemma~\ref{lemma:2colors-general}, we can show %the minimum cost $b$-matching $M$ from red to blue nodes where degree of red nodes is between $p$ and $q$, and degree of blue nodes is $1$, then $w(M) \leq w(M')\leq 2q\cdot OPT$. 
$w(M) \leq w(M')\leq 2q\cdot OPT$.
The rest of the proof is along the same lines as proof of Theorem~\ref{thm:2-colors-general}, and the algorithm outputs a clustering with at most $\mathcal{O}(q^2)\cdot OPT$ disagreements.
\end{proof}
In the case of multiple colors, where the goal is that in each cluster the ratio of the number of nodes of color $c_1$ to color $c_i$ be between $1:p_i$ and $1:q_i$ where $\forall 1<i\leq |\mathcal{C}|,  p_i \leq q_i$ and $p_i, q_i\in \mathcal{Z}^{\geq 1}$, Algorithm~\ref{alg:fairCC-multi-color} could be modified to handle this case: in each iteration, find a minimum cost weighted $b$-matching $M_i$ where the degree of each node of color $c_1$ is between $p_i$ and $q_i$, and the degree of each node of color $c_i$ is $1$. In the supplemetary material we show this algorithm obtains an approximation ratio of $\mathcal{O}( (\max_{i=1}^{|\mathcal{C}|}\{q_i\})^2\cdot|\mathcal{C}|^2)$ on the number of disagreements.
\subsection{Hardness}
\label{sec:hardness}
Consider a complete graph $G(V,E=E^+\cup E^-)$. Consider a new complete graph $H$ of $2|V|$ nodes which is contructed by duplicating the nodes of $G$ (say V and $V'=\{u'\mid u\in V\}$). We can assume the nodes of $ V$ to be colored blue and $V'$ can be considered as the mirror image of $V$ colored red. Each pair of nodes $u,v\in V$ are connected in the same way as $E$. A positive edge is added between $u$ and $u'$ for all $u\in V$, where $u'$ is the mirror image of $u$. For all $u\in V, v'\in V'$, the edge between $(u,v')$ has the same label as $(u,v)$ where $v$ is the mirror of $v'$. The graph $H$ restricted to vertices $V'$ is referred to as $G'(V',E')$ (as shown in Figure~\ref{fig:hardness}).

\begin{figure}[h]
\vspace{-14pt}
\includegraphics[scale=0.3]{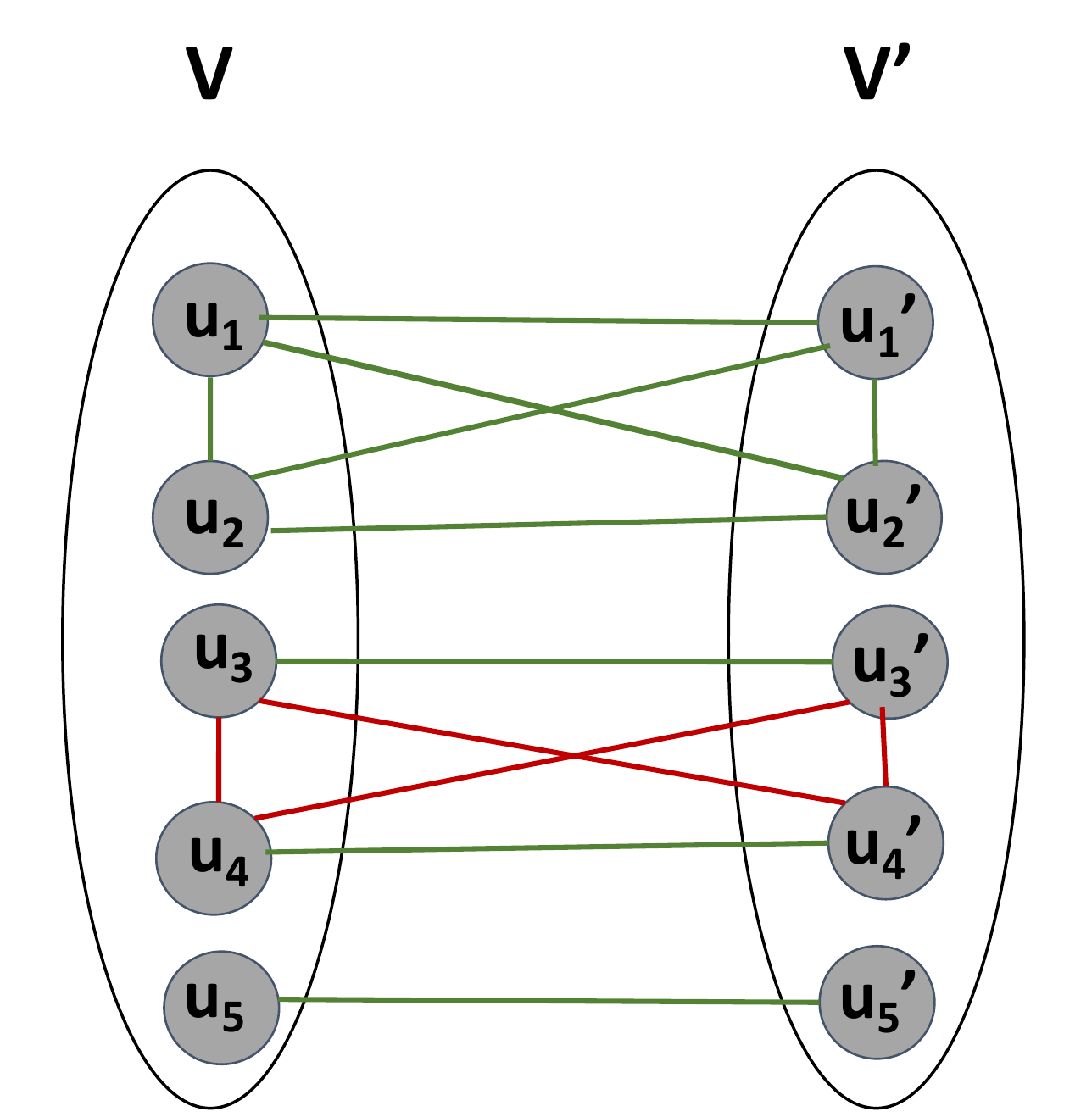}
\centering
\caption{Hardness example.}
\label{fig:hardness}
\end{figure}
Consider a clustering of $H$ with equal number of red and blue vertices in each cluster (say $\mathcal{C}'$). Now, we calculate the disagreements on the edges between nodes of $V$ and $V'$ to bound the total disagreements of $\mathcal{C}'$. A disagreement edge $(u',v')\in E'$ will lead to the following scenarios.
\begin{itemize}
\vspace{-8pt}
    \item{Case 1:} If $(u',v')\in E^-$: Therefore nodes $u'$ and $v'$ belong to same cluster.
    \begin{itemize}
    \vspace{-2pt}
        \item{Case 1.1:} If $u,v$ belong to the same cluster as $u'$ and $v'$, then edges $(u,v')$ and $(u',v)$ are mistakes.
        \item{Case 1.2:} $u$ belongs to same cluster as $u'$ and $v'$ but $v$ belongs to a different cluster. Edges $(v,v')$ and $(u,v')$ are the mistakes.
        \item{Case 1.3:} $u$ and $v$ belong to different cluster from $u'$ and $v'$, edges $(u,u')$ and $(v,v')$ are  mistakes
    \end{itemize}
    \item{Case 2: } If $(u',v')$ is a positive edge. This means $u'$ and $v'$ belong to different clusters.
    \begin{itemize}
    \vspace{-2pt}
        \item{Case 2.1:} If $u$ belongs to same cluster as $u'$ and v belongs to same cluster as $v'$ then edges $(u,v')$ and $(u',v)$ are the mistakes.
        \item{Case 2.2:} If $u$ belongs to different cluster from $u'$ and $v$ belongs to different cluster from $v'$ then $(u,u')$ and $(v,v')$ are the mistakes.
        \item{Case 2.3:} If $u$ belongs to different cluster from $u'$ and $v, v'$ belong to the same cluster:
        \begin{itemize}
            %\item{Case 2.3:} If $u$ belongs to different cluster from $u'$ and $v'$ then $(u,v')$ and $(u,u')$ are the mistakes.
            %\item{Case 2.4:} If $u$ belongs to different cluster from $u'$ but same as $v'$ then, $(u,u')$ is a mistake. If $v$ belongs to same cluster as $v'$ then $(u',v)$ is a mistake.
            \item{Case 2.3.1:} If $u$ and $v'$ belong to different clusters then $(u,v')$ and $(u,u')$ are  mistakes.
            \item{Case 2.3.2:} If $u$ and $v'$ belong to the same cluster then $(u,u')$ and $(u',v)$ are mistakes.
            \vspace{-8pt}
        \end{itemize}
    \end{itemize}
\end{itemize}
This shows that for every disagreement $(u',v')\in E'$, there exist at least 2 disagreements in the edges between $\{u',v'\}$ and $\{u,v\}$:  $(u,v'),(u',v),(u,u')$ and $(v,v')$. If the disagreements on the subgraph of $H$ limited to $(V',E')$ is $O_{G'}$, then the total disagreements on the edges between $V$ and $V'$ is at least $2O_{G'}$. Hence, the total disagreements of $\mathcal{C}'$ is atleast $3O_{G'}+O_G$, where $O_G$ is the disagreements on subgraph limited to $G(V,E)$. Symmetrically performing the above mentioned analysis on $(u,v)\in E$, the total disagreements of $\mathcal{C}'$ is atleast  $3O_{G}+O_{G'}$. Hence the total disagreements of $\mathcal{C}'$ is at  least $\max\{3O_{G'} + O_{G}, 3O_G + O_{G'}\}$, which is minimized when $O_G=O_{G'}=OPT_{G}$. This is minimized when each red node and its mirror image belong to the same cluster. By discarding the nodes of $V'$ from the optimal solution of $H$, we get the optimal solution of classical correlation clustering on $G$. 

\section{Experiments\label{sec:exp}}
In this section, we empirically evaluate our algorithm along with some baselines on real world datasets. We show that the clusters generated by classical correlation clustering algorithm are unfair and our algorithm returns fair clusters without much loss in the quality of the clusters.

\noindent \textbf{Datasets.} We consider the following datasets.

\noindent 
\textit{Bank}\footnote{\url{https://archive.ics.uci.edu/ml/datasets/Bank+Marketing}}. This dataset comprises of phone call records of a marketing campaign run by a Portuguese bank. The \emph{marital} status of the clients is considered feature to ensure fairness.

\noindent \textit{Adult}\footnote{\url{https://archive.ics.uci.edu/ml/datasets/adult}}. 
Each record in the dataset represents a US citizen whose information was collected during 1994 census. %The user attributes contain information about age, sex, work, education and annual income of the user.
We consider the feature \emph{sex} for fairness.

\noindent \textit{Medical Expenditure}\footnote{\url{https://meps.ahrq.gov/mepsweb/}}. The dataset contains medical information of various patients collected for research purposes. The \emph{race} attribute is considered for fairness.

\noindent \textit{Compas}\footnote{\url{https://github.com/propublica/compas-analysis}}. The dataset comprises of records of criminal trials used to analyze criminal recidivism. We consider race attribute for fairness.

%\noindent German Credit\footnote{\url{https://archive.ics.uci.edu/ml/datasets/statlog+(german+credit+data)}}. The dataset contains user attributes that are useful to analyze the distinction between good and bad credit. The sensitive attribute considered for fairness is whether the person is below the mean age or not.

Each record in the above mentioned datasets are considered as the nodes of the graph and the edge sign is determined by attribute simialrity between the nodes. We consider a sample of 1000 nodes in the above datasets for our experiments.
 \begin{figure*}[ht!]
\includegraphics[width=\columnwidth]{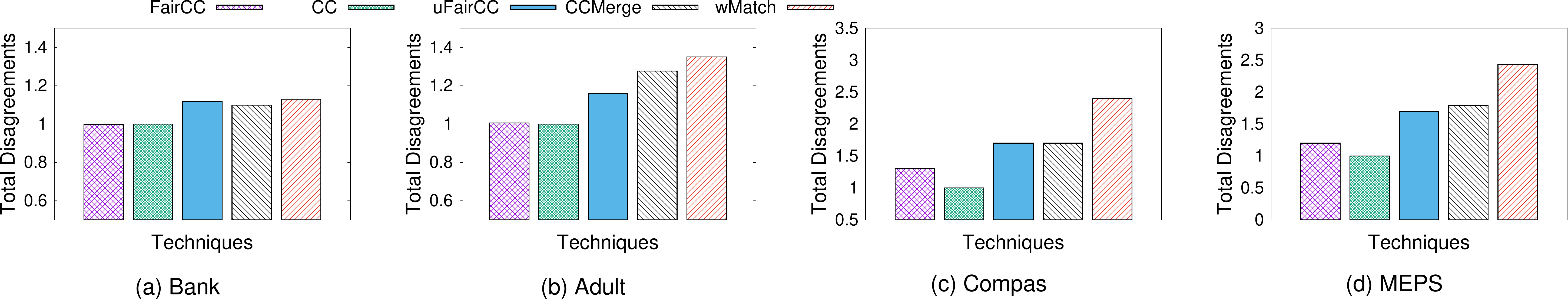}
\caption{Comparison of total disagreements for the different baselines with a constraint of ratio of two colors to be between 1:1 and 1:2.\label{fig:quality}}
\end{figure*}
 
\begin{figure*}[ht!]
\includegraphics[width=\columnwidth]{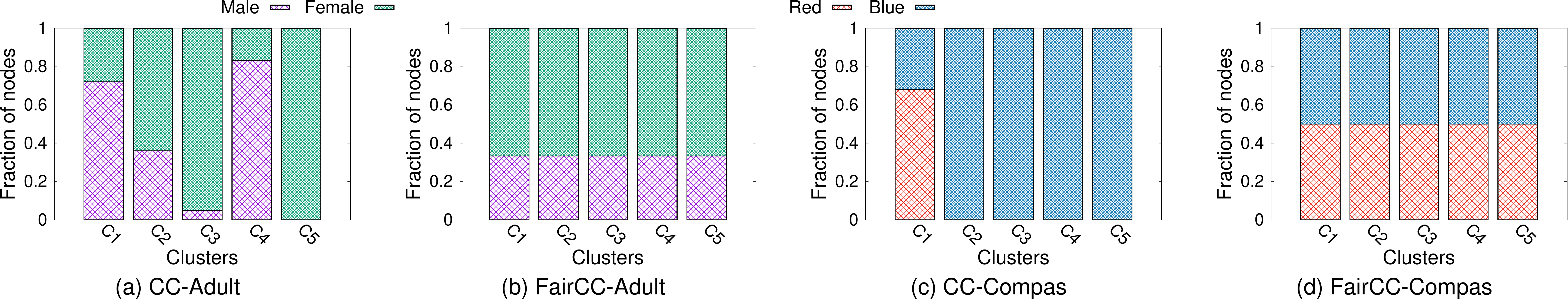}
\caption{Distribution of nodes of different colors in top 5 clusters generated by the algorithms.\label{fig:distribution}}
\end{figure*}

\noindent \textbf{Baselines.} 
%We consider a total of 5 baselines to evaluate the quality of clusters returned by our algorithm on various datasets.
We compare the quality of clusters returned by our algorithm with the following baselines focussed towards minimizing disagreements and ensure fairness. (i) \texttt{CC} -- The classical correlation clustering algorithm~\cite{ailon2008aggregating} that guarantees a 3-approximation of the optimal solution but does not ensure fairness  (iii) \texttt{wMatch} -- It generates a matching between nodes of different color as discussed in Algorithm~\ref{alg:fairCC} (iv) \texttt{uFairCC} -- Same as our algorithm with a difference that the matching component considers unit weight on inter-color edge.  (v) \texttt{CCMerge} -- This algorithm runs classical correlation clustering algorithm to generate initial clusters and then greedily add nodes to the clusters in decreasing size, so as to ensure fairness constraints. 

All the algorithms were implemented by us in Python using the networkx library on a 64GB RAM server. We run each algorithm 5 times and report average results. We calculate the total disagreements of the returned clusters to evaluate their quality. We denote our algorithm by \texttt{FairCC}. 

\subsection{Solution Quality}
This section compares the quality of clusters returned by the different algorithms for the specified distribution of features in each cluster.

\noindent \textbf{Fair proportion}
Figure~\ref{fig:quality} compares the total disagreements of the clusters returned by different algorithms. We observe similar trends across all the datasets. The clusters returned by \texttt{CC} do not obey the fairness constraint but all the other techniques ensure fairness. Across all datasets, \texttt{FairCC} achieves the minimum value of total disagreements as compared to the baselines that ensure fairness. Additionally, the loss in quality of clusters to achieve fairness as compared to \texttt{CC} is quite low. The matching returned by \texttt{wMatch} is same as that of \texttt{FairCC} but it achieves poor quality due to a number of positive edges going across the different components. The \texttt{CCMerge} algorithm ends up merging nodes which are connected by negative edges to ensure fairness, thereby losing on quality. \texttt{uFairCC} is same as our proposed solution except that the matching component between nodes of different colors does not consider weights. Superior performance of \texttt{FairCC} as compared to \texttt{uFairCC} justifies the benefit of our construction of a weighted bipartite graph to match nodes of different colors.

Figure~\ref{fig:distribution} shows distribution of top-5 clusters generated by \texttt{CC} and \texttt{FairCC} on Adult and Compas. The skew in distribution of the nodes of two colors in the clusters demonstrates the extent of unfairness in the results generated by classical correlation clustering algorithm. On the other hand, \texttt{FairCC} achieves the required fairness constraint in all clusters without losing much in quality. On increasing the range of plausible fraction of two colors, the total disagreements of \texttt{FairCC} go down but the trends remain similar.

\noindent \textbf{Multiple colors}
Figure~\ref{fig:multiple} compares the performance of \texttt{FairCC} with other baselines. Similar to the case of 2 colors, the quality of \texttt{FairCC} is not much worse than that of \texttt{CC} and is better than any other baseline. This comparison does not plot \texttt{CCMerge} as it does not generate clusters that obey fairness.

\begin{figure}[ht!]
\centering
\includegraphics[scale=0.3]{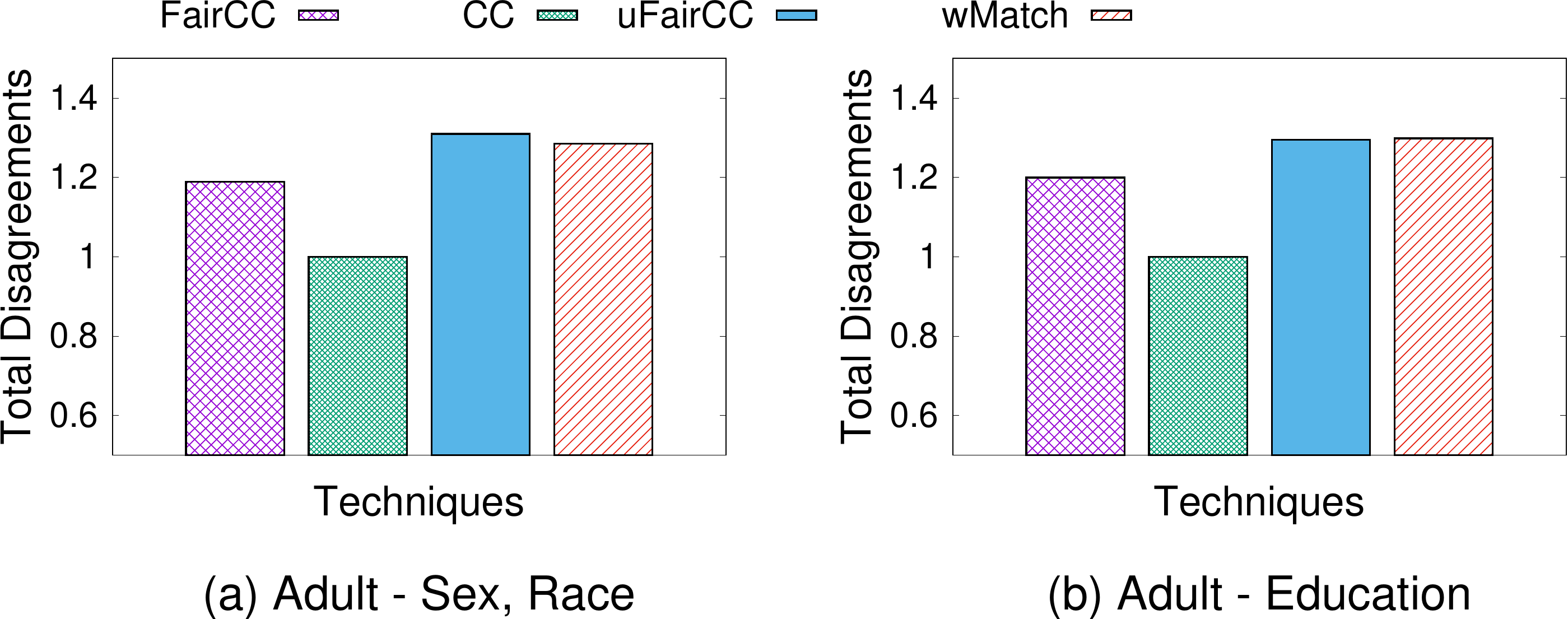}
\caption{Comparison of clusters returned by our algorithm and baselines for instances with more than two colors for Adult dataset. We omit \texttt{CCMerge} as it does not generate fair clusters.\label{fig:multiple}}
\end{figure}

\noindent \textbf{Running Time} \texttt{FairCC} runs in two stages. The first stage identifies a weighted matching between the nodes of different color followed by correlation clustering on one of the color. On all the datasets, our algorithm ran in less than 10 minutes. For a graph of $n$ nodes, with the increase in number of colors the size of subgraph constructed for matching reduces and total running time does not increase.
\section{Conclusions}
In this paper we studied the problem of fairness in clustering. We considered correlation clustering on complete graphs with color constraints to ensure balance and fairness in applications. We obtained combinatorial approximation algorithms for two models. In the first model the goal is to keep distribution of colors in each cluster the same as global distribution while approximately optimizing correlation clustering objective, e.g. minimizing disagreements. In the second model, the goal was to make sure no colors are over-represented or under-represented in each formed cluster while approximately minimizing total number of disagreements. In our experiments we showed our algorithms are effective on real world datasets. Future work could explore extension of our model to general graphs. Obtaining approximation algorithms which achieve better bounds are also of immediate interest.
%\newpage
\bibliography{bibfile}

\begin{thebibliography}{27}
\providecommand{\natexlab}[1]{#1}
\providecommand{\url}[1]{\texttt{#1}}
\expandafter\ifx\csname urlstyle\endcsname\relax
  \providecommand{\doi}[1]{doi: #1}\else
  \providecommand{\doi}{doi: \begingroup \urlstyle{rm}\Url}\fi

\bibitem[Ahmadi et~al.(2019)Ahmadi, Khuller, and Saha]{ahmadi2019min}
Ahmadi, S., Khuller, S., and Saha, B.
\newblock Min-max correlation clustering via multicut.
\newblock In \emph{International Conference on Integer Programming and
  Combinatorial Optimization}, pp.\  13--26. Springer, 2019.

\bibitem[Ahmadian et~al.(2019)Ahmadian, Epasto, Kumar, and
  Mahdian]{ahmadian2019clustering}
Ahmadian, S., Epasto, A., Kumar, R., and Mahdian, M.
\newblock Clustering without over-representation.
\newblock In \emph{Proceedings of the 25th ACM SIGKDD International Conference
  on Knowledge Discovery \& Data Mining}, pp.\  267--275, 2019.

\bibitem[Ailon et~al.(2008)Ailon, Charikar, and Newman]{ailon2008aggregating}
Ailon, N., Charikar, M., and Newman, A.
\newblock Aggregating inconsistent information: ranking and clustering.
\newblock \emph{Journal of the ACM (JACM)}, 55\penalty0 (5):\penalty0 1--27,
  2008.

\bibitem[Bansal et~al.(2004)Bansal, Blum, and Chawla]{bansal2004correlation}
Bansal, N., Blum, A., and Chawla, S.
\newblock Correlation clustering.
\newblock \emph{Machine learning}, 56\penalty0 (1-3):\penalty0 89--113, 2004.

\bibitem[Ben-Dor et~al.(1999)Ben-Dor, Shamir, and Yakhini]{ben1999clustering}
Ben-Dor, A., Shamir, R., and Yakhini, Z.
\newblock Clustering gene expression patterns.
\newblock \emph{Journal of computational biology}, 6\penalty0 (3-4):\penalty0
  281--297, 1999.

\bibitem[Bera et~al.(2019)Bera, Chakrabarty, Flores, and
  Negahbani]{bera2019fair}
Bera, S., Chakrabarty, D., Flores, N., and Negahbani, M.
\newblock Fair algorithms for clustering.
\newblock In \emph{Advances in Neural Information Processing Systems}, pp.\
  4955--4966, 2019.

\bibitem[Bercea et~al.(2018)Bercea, Gro{\ss}, Khuller, Kumar, R{\"o}sner,
  Schmidt, and Schmidt]{bercea2018cost}
Bercea, I.~O., Gro{\ss}, M., Khuller, S., Kumar, A., R{\"o}sner, C., Schmidt,
  D.~R., and Schmidt, M.
\newblock On the cost of essentially fair clusterings.
\newblock \emph{arXiv preprint arXiv:1811.10319}, 2018.

\bibitem[Bonchi et~al.(2014)Bonchi, Garcia-Soriano, and Liberty]{bonchi2014}
Bonchi, F., Garcia-Soriano, D., and Liberty, E.
\newblock Correlation clustering: from theory to practice.
\newblock In \emph{Proceedings of the 20th ACM SIGKDD international conference
  on Knowledge discovery and data mining}, pp.\  1972--1972, 2014.

\bibitem[Celis et~al.(2017{\natexlab{a}})Celis, Huang, and
  Vishnoi]{celis2017multiwinner}
Celis, L.~E., Huang, L., and Vishnoi, N.~K.
\newblock Multiwinner voting with fairness constraints.
\newblock \emph{arXiv preprint arXiv:1710.10057}, 2017{\natexlab{a}}.

\bibitem[Celis et~al.(2017{\natexlab{b}})Celis, Straszak, and
  Vishnoi]{celis2017ranking}
Celis, L.~E., Straszak, D., and Vishnoi, N.~K.
\newblock Ranking with fairness constraints.
\newblock \emph{arXiv preprint arXiv:1704.06840}, 2017{\natexlab{b}}.

\bibitem[Charikar et~al.(2005)Charikar, Guruswami, and
  Wirth]{charikar2005clustering}
Charikar, M., Guruswami, V., and Wirth, A.
\newblock Clustering with qualitative information.
\newblock \emph{Journal of Computer and System Sciences}, 71\penalty0
  (3):\penalty0 360--383, 2005.

\bibitem[Charikar et~al.(2017)Charikar, Gupta, and Schwartz]{charikar2017local}
Charikar, M., Gupta, N., and Schwartz, R.
\newblock Local guarantees in graph cuts and clustering.
\newblock In Eisenbrand, F. and Koenemann, J. (eds.), \emph{Integer Programming
  and Combinatorial Optimization}, pp.\  136--147, Cham, 2017. Springer
  International Publishing.

\bibitem[Chawla et~al.(2015)Chawla, Makarychev, Schramm, and
  Yaroslavtsev]{chawla2015near}
Chawla, S., Makarychev, K., Schramm, T., and Yaroslavtsev, G.
\newblock Near optimal lp rounding algorithm for correlationclustering on
  complete and complete k-partite graphs.
\newblock In \emph{Proceedings of the forty-seventh annual ACM symposium on
  Theory of computing}, pp.\  219--228, 2015.

\bibitem[Chierichetti et~al.(2017)Chierichetti, Kumar, Lattanzi, and
  Vassilvitskii]{chierichetti2017}
Chierichetti, F., Kumar, R., Lattanzi, S., and Vassilvitskii, S.
\newblock Fair clustering through fairlets.
\newblock In Guyon, I., Luxburg, U.~V., Bengio, S., Wallach, H., Fergus, R.,
  Vishwanathan, S., and Garnett, R. (eds.), \emph{Advances in Neural
  Information Processing Systems 30}, pp.\  5029--5037. Curran Associates,
  Inc., 2017.
\newblock URL
  \url{http://papers.nips.cc/paper/7088-fair-clustering-through-fairlets.pdf}.

\bibitem[Chierichetti et~al.(2019)Chierichetti, Kumar, Lattanzi, and
  Vassilvtiskii]{chierichetti2019matroids}
Chierichetti, F., Kumar, R., Lattanzi, S., and Vassilvtiskii, S.
\newblock Matroids, matchings, and fairness.
\newblock In \emph{The 22nd International Conference on Artificial Intelligence
  and Statistics}, pp.\  2212--2220, 2019.

\bibitem[Dwork et~al.(2012)Dwork, Hardt, Pitassi, Reingold, and
  Zemel]{dwork2012fairness}
Dwork, C., Hardt, M., Pitassi, T., Reingold, O., and Zemel, R.
\newblock Fairness through awareness.
\newblock In \emph{Proceedings of the 3rd innovations in theoretical computer
  science conference}, pp.\  214--226, 2012.

\bibitem[Feldman et~al.(2015)Feldman, Friedler, Moeller, Scheidegger, and
  Venkatasubramanian]{feldman2015certifying}
Feldman, M., Friedler, S.~A., Moeller, J., Scheidegger, C., and
  Venkatasubramanian, S.
\newblock Certifying and removing disparate impact.
\newblock In \emph{proceedings of the 21th ACM SIGKDD international conference
  on knowledge discovery and data mining}, pp.\  259--268, 2015.

\bibitem[Filkov \& Skiena(2004)Filkov and Skiena]{filkov2004integrating}
Filkov, V. and Skiena, S.
\newblock Integrating microarray data by consensus clustering.
\newblock \emph{International Journal on Artificial Intelligence Tools},
  13\penalty0 (04):\penalty0 863--880, 2004.

\bibitem[Guo et~al.(2008)Guo, H{\"u}ffner, Komusiewicz, and
  Zhang]{10.1007/978-3-540-79228-4_39}
Guo, J., H{\"u}ffner, F., Komusiewicz, C., and Zhang, Y.
\newblock Improved algorithms for bicluster editing.
\newblock In Agrawal, M., Du, D., Duan, Z., and Li, A. (eds.), \emph{Theory and
  Applications of Models of Computation}, pp.\  445--456, Berlin, Heidelberg,
  2008. Springer Berlin Heidelberg.
\newblock ISBN 978-3-540-79228-4.

\bibitem[Hou et~al.(2016)Hou, Emad, Puleo, Ma, and Milenkovic]{hou2016new}
Hou, J.~P., Emad, A., Puleo, G.~J., Ma, J., and Milenkovic, O.
\newblock A new correlation clustering method for cancer mutation analysis.
\newblock \emph{Bioinformatics}, 32\penalty0 (24):\penalty0 3717--3728, 2016.

\bibitem[Kalhan et~al.(2019)Kalhan, Makarychev, and Zhou]{kalhan2019improved}
Kalhan, S., Makarychev, K., and Zhou, T.
\newblock Improved algorithms for correlation clustering with local objectives.
\newblock \emph{arXiv preprint arXiv:1902.10829}, 2019.

\bibitem[Kamishima et~al.(2012)Kamishima, Akaho, Asoh, and
  Sakuma]{kamishima2012fairness}
Kamishima, T., Akaho, S., Asoh, H., and Sakuma, J.
\newblock Fairness-aware classifier with prejudice remover regularizer.
\newblock In \emph{Joint European Conference on Machine Learning and Knowledge
  Discovery in Databases}, pp.\  35--50. Springer, 2012.

\bibitem[Kleindessner et~al.(2019)Kleindessner, Samadi, Awasthi, and
  Morgenstern]{kleindessner2019guarantees}
Kleindessner, M., Samadi, S., Awasthi, P., and Morgenstern, J.
\newblock Guarantees for spectral clustering with fairness constraints.
\newblock \emph{arXiv preprint arXiv:1901.08668}, 2019.

\bibitem[Puleo \& Milenkovic(2018)Puleo and Milenkovic]{puleo2018correlation}
Puleo, G.~J. and Milenkovic, O.
\newblock Correlation clustering and biclustering with locally bounded errors.
\newblock \emph{IEEE Transactions on Information Theory}, 64\penalty0
  (6):\penalty0 4105--4119, 2018.

\bibitem[R{\"o}sner \& Schmidt(2018)R{\"o}sner and Schmidt]{rosner2018privacy}
R{\"o}sner, C. and Schmidt, M.
\newblock Privacy preserving clustering with constraints.
\newblock \emph{arXiv preprint arXiv:1802.02497}, 2018.

\bibitem[Schmidt et~al.(2018)Schmidt, Schwiegelshohn, and Sohler]{schmidt2018}
Schmidt, M., Schwiegelshohn, C., and Sohler, C.
\newblock Fair coresets and streaming algorithms for fair k-means clustering.
\newblock \emph{arXiv preprint arXiv:1812.10854}, 2018.

\bibitem[Veldt et~al.(2018)Veldt, Gleich, and Wirth]{veldt2018correlation}
Veldt, N., Gleich, D.~F., and Wirth, A.
\newblock A correlation clustering framework for community detection.
\newblock In \emph{Proceedings of the 2018 World Wide Web Conference}, pp.\
  439--448, 2018.

\end{thebibliography}
\bibliographystyle{icml2020}
%\newpage
%\newpage
\appendix
\section{Supplementary Material}
\subsection{Proof of Theorem~\ref{thm:multiple-colors}}

The following lemma could be proved similar to the way Lemma~\ref{lemma:2colors-general} was proved.
\begin{lemma}
\label{lem:multiple-colors-relation-cost-matching-opt}
In each matching $M_i$ constructed in Algorithm~\ref{alg:fairCC-multi-color}, $w(M_i)\leq 2p_i\cdot OPT$.
\end{lemma}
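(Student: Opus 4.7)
The plan is to mirror the argument used to prove Lemma~\ref{lemma:2colors-general}, applied independently to each of the bipartite $b$-matchings $M_i$ ($1 < i \leq |\mathcal{C}|$). Since $M_i$ is a minimum-weight $b$-matching between $V_{c_1}$ and $V_{c_i}$ with prescribed degrees $p_i$ on $V_{c_1}$ and $1$ on $V_{c_i}$, it suffices to exhibit any feasible $b$-matching $M_i'$ satisfying these same degree constraints with $w(M_i') \leq 2 p_i \cdot OPT$; the min-cost property then gives $w(M_i) \leq w(M_i') \leq 2 p_i \cdot OPT$.

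The first step is to construct $M_i'$ from the OPT clustering. The global fairness constraint guarantees that in every OPT cluster $C$ one has $|V_{c_i}\cap C| = p_i \cdot |V_{c_1}\cap C|$, so within $C$ one can assign to each $c_1$-vertex of $C$ exactly $p_i$ distinct $c_i$-vertices of $C$, producing a feasible $b$-matching entirely inside $C$. Taking the union over all OPT clusters gives $M_i'$, with the key property that every matched pair shares an OPT cluster.

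The second step is the charging argument bounding $w(M_i')$. A unit of $w(M_i')$ comes from a triangle $(u,v,z)$ with $(u,v)\in M_i'$ and edges $(u,z),(v,z)$ having opposite labels. Since $u,v$ share an OPT cluster, if $z$ lies in that cluster the mismatch forces one of $(u,z),(v,z)$ to be a negative intra-cluster edge, while if $z$ lies in a different OPT cluster it forces one of them to be a positive cross-cluster edge; in both cases a specific OPT-disagreement edge is identified. Conversely, for any fixed OPT-disagreement $e=(x,y)$, the triangles pointing to $e$ are precisely those in which one of $x,y$ is an endpoint of a matched pair and the other plays the role of the third vertex $z$. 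Hence the number of triangles charged to $e$ is at most $\deg_{M_i'}(x) + \deg_{M_i'}(y)$, which is bounded by $2p_i$ since the degrees in $M_i'$ are $p_i$ on $V_{c_1}$, $1$ on $V_{c_i}$, and $0$ on every other color class.

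There is no substantive new obstacle compared to Lemma~\ref{lemma:2colors-general}: the proof is a direct adaptation with $p$ replaced by $p_i$. The only point deserving care is that vertices of colors $c_j$ with $j \notin \{1, i\}$ have zero degree in $M_i'$, so OPT-disagreements touching such vertices contribute strictly less to the bound rather than more; the worst case remains disagreement edges with both endpoints in $V_{c_1}$, yielding exactly $2p_i$ and thus $w(M_i) \leq w(M_i') \leq 2p_i \cdot OPT$.
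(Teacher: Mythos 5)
Your proof is correct and follows essentially the same route as the paper: the paper proves this lemma by simply invoking the argument of Lemma~\ref{lemma:2colors-general} (build a feasible $b$-matching $M_i'$ inside the OPT clusters, charge each unit of $w(M_i')$ to an OPT disagreement that is hit at most $2p_i$ times, then use min-cost optimality of $M_i$). Your write-up is in fact more explicit than the paper's one-line proof, spelling out both the feasibility of $M_i'$ under the per-cluster ratio constraint and the degree-based counting $\deg_{M_i'}(x)+\deg_{M_i'}(y)\leq 2p_i$.
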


Let $p_{\max} = \max_{i=1}^{|\mathcal{C}|}\{p_i\}$. In the following we show how to pay for all the disagreements within a $\mathcal{O}( (p_{\max})^2\cdot|\mathcal{C}|^2)\cdot OPT$ budget. For simplicity we assume color $c_1$ is red, and there are at least two other colors blue $(c_2)$ and green $(c_3)$. Consider the following cases:

\textbf{Case 1:} This case is similar to Case $1$ in Section~\ref{sec:general-2-colors-CC}. Consider a disagreement between a red vertex (let's say $r_1$), and a node of a different color (let's say blue node $b_3$) such that $r_1$ and $b_3$ are not matched by matching $M_2$. Let's assume $M_2$ matches $b_3$ to $r_2$.

\begin{itemize}
\item{Case 1.1:} If edges $(r_1, b_3)$ and $(r_1, r_2)$ have disagreeing labels, then cost of the edge $(r_2, b_3)$ counted in the $w(M_2)$ is paying for it.

\item{Case 1.2:} If edges $(r_1, b_3)$ and $(r_1, r_2)$ have the same signs, the disagreement on $(r_1, b_3)$ could be charged to the edge $(r_1, r_2)$. The number of such edges charged to $(r_1, r_2)$ is $2p_2$ (and $2p_i$ in general if instead of $b_3$ we considered a node of color $c_i$).
\end{itemize}
\textbf{Case 2:} There exists a disagreement between two non-red nodes from two different hyper nodes, let's say between nodes $b_1$, $g_1$. Let's assume $b_1$ is matched to $r_1$ by $M_2$ (the matching between red and blue nodes), and $g_1$ is matched to $r_2$ by $M_3$ (the matching between red and green nodes).
\begin{itemize}
    \item Case 2.1: Edges $(b_1, g_1)$ and $(r_1, g_1)$ are disagreeing. Then the cost of edge $(r_1, b_1)$ included in the cost of $w(M_2)$ is paying for it.
    \item Case 2.2: Edges $(b_1, g_1)$ and $(r_1, g_1)$ have the same labels and have different labels with $(r_1, r_2)$. We charge the disagreement on $(b_1, g_1)$ and $(r_1, g_1)$ to the edge $(r_2, g_1)$. There are $(|\mathcal{C}|-1)\cdot p_{\max}$ choices for $b_1$ which are all the nodes that are matched to $r_1$ in all the matchings $M_2,\cdots, M_{|\mathcal{C}|}$.
    Therefore in this case, at most $(|\mathcal{C}|-1)\cdot p_{\max}+1$ edges, are charged to the edge  $(r_2, g_1)$, when the matching edge between $(r_2, g_1)$ is paying $1$ for the disagreement between $(r_1, r_2)$ and $(r_1, g_1)$. 
    \item Case 2.3: Edges $(b_1, g_1), (r_1, g_1),(r_1, r_2)$ all have the same labels. We charge disagreements on these edges to $(r_1, r_2)$.
    There are at most $((|\mathcal{C}|-1)\cdot p_{\max})^2$ choices for a pair of non-red nodes like $(b_1, g_1)$, charged to $(r_1, r_2)$.
\end{itemize}
\textbf{Case 3:} This case captures the disagreement between two non-red nodes in the same hyper-node and is similar to Case $3$ in Section~\ref{sec:general-2-colors-CC}. 

There are a total of $((|\mathcal{C}|-1)\cdot p_{\max})^2+2p_{\max}$ charges on edges between red nodes (Cases 1.2 and 2.3) accounting for the total cost to be $(((|\mathcal{C}|-1)\cdot p_{\max})^2+2p_{\max})C$, where $C$ is the correlation clustering objective on red vertices. Similarly, we charge each matched edge $|\mathcal{C}|\cdot p_{\max}+1$ times (Case 2.2) and $p_{\max}-1$ times (Case 3), thereby contributing $\sum_{i=2}^{|\mathcal{C}|}((|\mathcal{C}|+1)\cdot p_{\
max})\cdot w(M_i)$ to the final objective. Considering Lemma~\ref{lem:multiple-colors-relation-cost-matching-opt}, we can conclude the approximation ratio is $\mathcal{O}( (p_{\max})^2\cdot|\mathcal{C}|^2)$, and this completes proof of Theorem~\ref{thm:multiple-colors}.

\noindent \textbf{Note:} When $p_{max}=1$, we can perform classical correlation clustering on nodes of any color $C_i\in \mathcal{C}$ and pick the one that has minimum value. This optimization helps improve the approximation of case 2.3 by reducing the   dependence from $|\mathcal{C}|^2$  to $|\mathcal{C}|$.

\subsection{Proof of Theorem~\ref{thm:over-representation}}
\begin{lemma}
\label{lemma:2colors-over-representation}
$w(M)\leq 2q\cdot OPT$.
\end{lemma}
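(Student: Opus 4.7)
The plan is to mirror the strategy used in Lemma~\ref{lemma:2colors-general}: exhibit a feasible witness $b$-matching $M'$ whose matched edges lie inside the clusters of $OPT$, bound $w(M')$ in terms of $OPT$ by a charging argument, and conclude $w(M)\le w(M')$ by the minimality of $M$.

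First I would construct $M'$ cluster-by-cluster from the optimum solution. Fix any cluster $\mathcal{X}$ of $OPT$ and let $n_r$, $n_b$ denote the number of red and blue vertices in $\mathcal{X}$. Because $OPT$ is feasible, $n_r\cdot p\le n_b\le n_r\cdot q$. I would then build an intra-cluster $b$-matching in two passes: in the first pass assign $p$ distinct blue vertices to every red vertex (this exhausts $n_r\cdot p\le n_b$ blue vertices and uses exactly $p$ slots per red vertex), and in the second pass greedily assign each remaining blue vertex to some red vertex whose current matched degree is strictly less than $q$. The second pass terminates successfully because, having placed $n_r\cdot p$ blue vertices already, at most $n_b-n_r\cdot p\le n_r(q-p)$ blue vertices remain, and the total residual red capacity is exactly $n_r(q-p)$. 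Taking the union of these per-cluster matchings produces a global $b$-matching $M'$ that obeys the required degree bounds ($p\le\deg(r)\le q$ for every red $r$, and $\deg(b)=1$ for every blue $b$), and every matched edge has both endpoints in the same cluster of $OPT$.

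Next I would bound $w(M')$. Following the edge-charging used in Lemma~\ref{lemma:2colors-general}, I fix any edge $(v_i,v_j)$ of $G$ on which $OPT$ pays a disagreement and count how often this edge can contribute to $w(u,v)$ for some $(u,v)\in M'$. By the definition of the weights in Algorithm~\ref{alg:fairCC}, the edge $(v_i,v_j)$ contributes to $w(u,v)$ only when $\{u,v\}\cap\{v_i,v_j\}\neq\emptyset$. Each red vertex of $M'$ is incident to at most $q$ matched edges and each blue vertex to exactly one, so for any vertex $v_i$ the total number of matched edges touching $v_i$ is at most $q$. Thus the disagreement on $(v_i,v_j)$ is counted at most $2q$ times across all edges of $M'$, giving
\[
w(M')\;\le\;2q\cdot OPT.
\]
Since $M$ is a minimum-weight $b$-matching subject to the same degree constraints and $M'$ is a feasible competitor, $w(M)\le w(M')\le 2q\cdot OPT$, which is the claim.

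The one step that requires care is verifying feasibility of the witness construction: the inequality $n_b\le n_r\cdot q$ is exactly what makes the greedy second pass succeed, while $n_r\cdot p\le n_b$ is what allows the first pass to place the required $p$ blue vertices per red vertex. Everything else is bookkeeping analogous to Lemma~\ref{lemma:2colors-general}, with the role of $p$ in the degree bound there replaced by $q$ here, which is precisely the source of the $2q$ factor.
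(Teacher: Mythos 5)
Your proposal is correct and follows essentially the same route as the paper's proof: build a witness $b$-matching $M'$ inside the clusters of $OPT$ by first giving each red node $p$ distinct blue nodes and then greedily placing the leftovers, bound $w(M')\le 2q\cdot OPT$ by noting each disagreement is counted at most $2q$ times, and invoke minimality of $M$. Your explicit residual-capacity check for the second pass and the per-vertex incidence count are slightly more detailed than the paper's, but the argument is the same.
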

\begin{proof}
Given the optimum solution $OPT$, we can show a $b$-matching $M'$ could be constructed where endpoints of each matched edge belong to the same cluster, and the degree of each red node is at least $p$ and at most $q$. In the $OPT$ solutin,  in each cluster, the ratio of the number of red to blue nodes is between $1:q$ and $1:p$. Consider a specific cluster $\mathcal{X}$ in the $OPT$ solution, let $n_r, n_b$ denote the number of red and blue nodes in this cluster respectively. Therefore, $n_r\cdot p \leq n_b \leq n_r\cdot q$. Construct a $b$-matching inside $\mathcal{X}$ as following: first assign $p$ distinct blue nodes to each red node in $\mathcal{X}$. If any blue nodes in $\mathcal{X}$ are left un-assigned, assign them to any red node in $\mathcal{X}$ which is assigned to less than $q$ blue nodes. Since $n_b \leq q\cdot n_r$, we can find a $b$-matching with desired properties in each cluster of the $OPT$ solution. $M'$ is the union of the $b$-matchings formed in all the clusters.
First, we show $w(M')\leq 2q\cdot  OPT_G$.
Consider an edge between arbitrary vertices $v_i$ and $v_j$, such that they are not matched in $M'$. If a disagreement occurs on the edge between $(v_i, v_j)$ in $OPT_G$, this disagreement could have been counted at most $2q$ times in $w(M')$. Therefore $w(M')\leq 2q\cdot OPT_G$. Since $M$ is a min cost $b$-matching satisfying degree constraints:
$$w(M)\leq w(M')\leq 2q\cdot OPT_G$$
\end{proof}

The algorithm is as following: run an $\alpha$-approximation correlation clustering on a subset of $G$ which includes the red vertex from each hyper-node (i.e. a collection of matched nodes).
In the following, we show we can pay for all the disagreements within a $\Big((q^2+2q)\cdot\alpha + 4q^2\Big)\cdot OPT$ budget.

\textbf{Case 1:} In Figure~\ref{fig:general-CC}, consider a disagreement between a red vertex  ($r_1$) and a blue ($b_3$) node from different hyper-nodes. %, for example $r_2$ and $r_4$. 
Two cases might happen:

\begin{itemize}
\item{Case 1.1:} If edges $(r_1, b_3)$ and $(r_1, r_2)$ have disagreeing labels, then cost of the edge $(r_2, b_3)$ counted in the matching  is paying for it.

\item{Case 1.2:} If edges $(r_1, b_3)$ and $(r_1, r_2)$ have the same signs, the disagreement on $(r_1, b_3)$ could be charged to the edge $(r_1, r_2)$. The number of such edges charged to $(r_1, r_2)$ is at most $2q$.
\end{itemize}
\textbf{Case 2:} There exists a disagreement between two blue nodes from two different hyper-nodes, like $(b_1, b_3)$ in Figure \ref{fig:general-CC}.
\begin{itemize}
    \item Case 2.1: Edges $(b_1, b_3)$ and $(r_1, b_3)$ are disagreeing. Then the cost of edge $(r_1, b_1)$ included in the cost of the matching  is paying for it.
    \item Case 2.2: Edges $(b_1, b_3)$ and $(r_1, b_3)$ have the same labels and have different labels with $(r_1, r_2)$. We charge the disagreement on $(b_1, b_3)$ to the edge $(r_2, b_3)$. There are $p$ choices for $b_1$, therefore at most $p$ edges of this type, plus the edge $(r_1, b_3)$ are charged to the edge  $(r_2,b_3)$, when $M$ is paying $1$ for the disagreement between $(r_1, r_2)$ and $(r_1, b_3)$. Therefore, we need to account for at most $q+1$ times the matching cost to account for all edges of this type.
    \item Case 2.3: Edges $(b_1, b_3), (r_1, b_3),(r_1, r_2)$ all have the same labels. There are at most $q^2$ choices for a pair of blue nodes like $(b_1, b_3)$, and disagreements on these edges could be charged to $(r_1, r_2)$.
\end{itemize}
\textbf{Case 3:} A disagreement between two blue nodes in the same hyper-node, $b_1$ and $b_2$ which means $(b_1,b_2)$ is a negative edge. If $(r_1,b_1)$ is positive then $(r_1,b_2)$'s contribution in the matching cost captures it. Similarly, if $(r_1,b_2)$ is a positive edge then the $(r_1,b_2)$'s contribution in matching cost captures this. If both $(r_1,b_1)$ and $(r_1,b_2)$ are negative edges then we can charge both the edges $1/2$. The total number of times an edge $(r_1,b_1)$ is charged is at most $q-1$ as there can be a maximum of $q-1$ negative edges from $b_1$.% Hence the total number of such edges for each blue vertex is ${p-1}$.

There are a total of $q^2+2q$ charges on edges between red nodes (Cases 1.2 and 2.3) accounting for the total cost to be $(q^2+2q)C$, where $C$ is the correlation clustering objective on red vertices. Similarly, we charge each matched edge at most $q+1$ times their weight in Case 2.2 and at most $q-1$ times their weight in Case 3, the total contribution to the final objective is at most $(2q)\cdot w(M)$.  All the charges required to handle cases 1.1 and 2.1 do not add any additional cost to the objective as they are already accounted for the edges considered in $(2q)\cdot w(M)$.
Hence, the total objective value of returned clusters is at most:
\[(q^2+2q)C + (q+1)\cdot w(M) \leq \Big((q^2+2q)\cdot\alpha + 2q\times 2q\Big)\cdot OPT\]
Therefore the approximation ratio is $\mathcal{O}(q^2)$, and this completes proof of Theorem~\ref{thm:over-representation}.

\subsection{Proof of Theorem~\ref{thm:over-representation-multiple-colors}}
Algorithm~\ref{alg:fairCC-multi-color} could be modified to handle this scenario. We need to find a min cost $b$-matching $M_i$ in each iteration where degree of each node of color $c_1$ needs to be between $p_i$ and $q_i$, and degree of each node of color $c_i$ needs to be $1$.
By applying Lemma~\ref{lemma:2colors-over-representation} to each matching $M_i$, one can see $w(M_i) \leq 2q_i\cdot OPT$.

Next, we run an $\alpha$-approximation on the nodes of color $c_1$, and for each fixed vertex $u$ of color $c_1$, all the vertices that are matched to $u$ using any of the matchings $\{M_2,\cdots, M_{|\mathcal{C}|}\}$ go to the same cluster as $u$.

Let $q_{max} = \max\{q_2,\cdots, q_{|\mathcal{C}|}\}$. In the following we show how to pay for all disagreements within a $\mathcal{O}((q_{\max}^2)\cdot |\mathcal{C}|^2)\cdot OPT$ budget. For simplicity let's assume color $c_1$ is red, and there are at least two other colors blue $(c_2)$ and green $(c_3)$. Consider the following cases:

\textbf{Case 1:} Consider a disagreement between a red vertex (let's say $r_1$), and a node of a different color (let's say blue node $b_3$) such that $r_1$ and $b_3$ are not matched by matching $M_2$. Let's assume $M_2$ matches $b_3$ to $r_2$.

\begin{itemize}
\item{Case 1.1:} If edges $(r_1, b_3)$ and $(r_1, r_2)$ have disagreeing labels, then cost of the edge $(r_2, b_3)$ counted in the $w(M_2)$ is paying for it.

\item{Case 1.2:} If edges $(r_1, b_3)$ and $(r_1, r_2)$ have the same signs, the disagreement on $(r_1, b_3)$ could be charged to the edge $(r_1, r_2)$. The number of such edges charged to $(r_1, r_2)$ is at most $2q_2$ (and $2q_i$ in general if instead of $b_3$ we considered a node of color $c_i$).
\end{itemize}
\textbf{Case 2:} There exists a disagreement between two non-red nodes from two different hyper nodes, let's say between nodes $b_1$, $g_1$. Let's assume $b_1$ is matched to $r_1$ by $M_2$ (the matching between red and blue nodes), and $g_1$ is matched to $r_2$ by $M_3$ (the matching between red and green nodes).
\begin{itemize}
    \item Case 2.1: Edges $(b_1, g_1)$ and $(r_1, g_1)$ are disagreeing. Then the cost of edge $(r_1, b_1)$ included in the cost of $w(M_2)$ is paying for it.
    \item Case 2.2: Edges $(b_1, g_1)$ and $(r_1, g_1)$ have the same labels and have different labels with $(r_1, r_2)$. We charge the disagreement on $(b_1, g_1)$ and $(r_1, g_1)$ to the edge $(r_2, g_1)$. There are $(|\mathcal{C}|-1)\cdot q_{\max}$ choices for $b_1$ which are all the nodes that are matched to $r_1$ in all the matchings $M_2,\cdots, M_{|\mathcal{C}|}$.
    Therefore in this case, at most $(|\mathcal{C}|-1)\cdot q_{\max}+1$ edges, are charged to the edge  $(r_2, g_1)$, when the matching edge between $(r_2, g_1)$ is paying $1$ for the disagreement between $(r_1, r_2)$ and $(r_1, g_1)$. 
    \item Case 2.3: Edges $(b_1, g_1), (r_1, g_1),(r_1, r_2)$ all have the same labels. We charge disagreements on these edges to $(r_1, r_2)$.
    There are at most $((|\mathcal{C}|-1)\cdot q_{\max})^2$ choices for a pair of non-red nodes like $(b_1, g_1)$, charged to $(r_1, r_2)$.
\end{itemize}
\textbf{Case 3:} This case captures the disagreement between two non-red nodes in the same hyper-node and is similar to Case $3$ in Section~\ref{sec:general-2-colors-CC}. 

There are a total of $((|\mathcal{C}|-1)\cdot q_{\max})^2+2q_{\max}$ charges on edges between red nodes (Cases 1.2 and 2.3) accounting for the total cost to be $(((|\mathcal{C}|-1)\cdot q_{\max})^2+2q_{\max})C$, where $C$ is the correlation clustering objective on red vertices. Similarly, we charge each matched edge $|\mathcal{C}|\cdot q_{\max}+1$ times (Case 2.2) and $q_{\max}-1$ times (Case 3), thereby contributing $\sum_{i=2}^{|\mathcal{C}|}((|\mathcal{C}|+1)\cdot q_{\
max})\cdot w(M_i)$ to the final objective. Considering Lemma~\ref{lem:multiple-colors-relation-cost-matching-opt}, we can conclude the approximation ratio is $\mathcal{O}( (q_{\max})^2\cdot|\mathcal{C}|^2)$, and this completes proof of Theorem~\ref{thm:multiple-colors}.

%\section{CodeBase}
%We plan to opensource the code repository along with readme files.
\end{document}